

\documentclass{article}

\textwidth6.5in \textheight9in \oddsidemargin 0pt \evensidemargin
0pt \topmargin -47pt

\usepackage[auth-lg]{authblk}
\usepackage{graphicx}
\usepackage{color}
\usepackage{times}
\usepackage{fullpage}
\usepackage{ifpdf}
\usepackage{color}
\usepackage{amsmath,amssymb}

\frenchspacing

\DeclareMathAlphabet{\mathpzc}{OT1}{pzc}{m}{it}

\vfuzz2pt 
\hfuzz2pt 
\newtheorem{thm}{Theorem}[section]
\newtheorem{cor}[thm]{Corollary}
\newtheorem{lem}[thm]{Lemma}

\numberwithin{equation}{section}

\newcommand{\A}{\mathcal{A}}
\newcommand{\El}{\mathcal{E}_{\{\ell\}}}
\newcommand{\opt}{\mathbf{OPT}}
\newcommand{\sq}{\hbox{\rlap{$\sqcap$}$\sqcup$}}
\newcommand{\qed}{\hspace*{\fill}\sq}
\newenvironment{proof}{\noindent {\bf Proof.}\ }{\qed\par\vskip 4mm\par}



\newcommand{\ccp}{{\tt CCP}}
\newcommand{\uccp}{{\tt UCCP}}
\newcommand{\spt}{{\tt SPT}}
\newcommand{\sptg}{{\tt SPT-G}}
\newcommand{\spc}{{\tt SPC}}
\newcommand{\sink}{{\mathbf{sink}}}
\newcommand{\vn}{{\mathpzc{n}}}
\newcommand{\vm}{{\mathpzc{m}}}

\frenchspacing \setlength{\affilsep}{0.5in}

\begin{document}
\begin{titlepage}

\title{Energy-Efficient Shortest Path Algorithms for Convergecast in Sensor
Networks}

\author[1]{John Augustine}
\author[2]{Qi Han}
\author[2]{Philip Loden}
\author[1]{Sachin Lodha}
\author[1]{Sasanka Roy}
\affil[1]{Tata Research Development and Design Centre\\

54B, Hadapsar Industrial Estate, Hadapsar\\

Pune, India 411013 \\

{\tt \{john.augustine, sachin.lodha, sasanka.roy\}@tcs.com}\\

~}

\affil[2]{   Dept. of Math and Computer Sciences\\

Colorado School of Mines, Golden, CO 80401 \\

{\tt \{ploden,qhan\}@mines.edu}}
\date{}

\maketitle \thispagestyle{empty}

\begin{abstract}
We introduce a variant of the capacitated vehicle routing problem
that is encountered in sensor networks for scientific data
collection. Consider an undirected graph $G=(V \cup
\{\mathbf{sink}\},E)$. Each vertex $v \in V$ holds a constant-sized
reading normalized to $1$ byte that needs to be communicated to the
$\mathbf{sink}$. The communication protocol is defined such that
readings travel in packets. The packets have a capacity of $k$
bytes. We define a {\em packet hop} to be the communication of a
packet from a vertex to its neighbor. Each packet hop drains one
unit of energy and therefore, we need to communicate the readings to
the $\mathbf{sink}$ with the fewest number of hops.

We show this problem to be NP-hard and counter it with a simple
distributed $(2-\frac{3}{2k})$-approximation algorithm called {\tt
SPT}~that uses the shortest path tree rooted at the $\mathbf{sink}$.
We also show that {\tt SPT}~is absolutely optimal when $G$ is a tree
and asymptotically optimal when $G$ is a grid. Furthermore, {\tt
SPT}~has two nice properties. Firstly, the readings always travel
along a shortest path toward the $\mathbf{sink}$, which makes it an
appealing solution to the convergecast problem as it fits the
natural intuition. Secondly, each node employs a very elementary
packing strategy. Given all the readings that enter into the node,
it sends out as many fully packed packets as possible followed by at
most 1 partial packet. We show that any solution that has either one
of the two properties cannot be a $(2-\epsilon)$-approximation, for
any fixed $\epsilon > 0$. This makes \spt~optimal for the class of
algorithms that obey either one of those properties.
\end{abstract}

\bigskip

\centerline{{\bf Keywords}: Sensor Networks, Shortest Path, Vehicle
Routing, Bin Packing, Convergecast}

\end{titlepage}

\section{Introduction}

We introduce a problem that combines
bin-packing~\cite{CofGarJoh97,Garey1979} and network routing. To
motivate the problem, we develop it as a variant of the capacitated
vehicle routing problem. Going beyond theoretical interest, we also
show how this problem arises naturally in the study of sensor
networks that collect precise data about the environment with
minimal amount of energy expenditure.

Consider a basic version of the capacitated vehicle routing problem
in which there is a depot (or sink) and several agricultural towns
connected by a road network. Each town has some produce (in a sealed
bag or container) that is to be sent to the depot. A vehicle has to
pick up all the items and transport it to the depot. In so doing,
the vehicle must never carry items that exceed its load capacity and
the bags are not to be opened en route. Note that this problem
naturally combines bin packing and the traveling salesperson
problem. Our problem is a simplification, which retains the bin
packing aspect, but simplifies the transportation issues as
described in the following scenario.

Suppose we contract with a transportation company that provides us
trucks for transporting the sealed bags of produce from these towns
to a central hub. For simplicity, assume that trucks will be
available when  needed. The cost of a truck trip between any pair of
neighboring towns is a fixed price regardless of the distance
between them and how heavy (or light) the load is. However, we do not
pay for empty trips; the trucking company is responsible for moving
empty trucks to the next demand location. Our goal is to transport
all the produce to the depot, while minimizing the total cost of
hiring trucks from the trucking company.

Similarly, consider the convergecast problem in sensor networks. The
sensor nodes need to send sensed data to a centralized sink via
multiple hops. A sensor reading can usually be encoded in a few
bytes\footnote{We use bytes for simplicity, but any appropriate unit
of memory can be used.}, so more than one reading can fit into a
standard transmission packet, but there is a limit on the total
number of bytes that each packet can carry. Each reading has to stay
intact along the way. This is different from sensor data aggregation
where a function is performed over several sensor readings to,
typically, generate one single representative value for each region
being sensed~\cite{GE03}. While data aggregation is agreeable in
many situations, under certain scenarios, applications would rather
desire the collected data to be exact. This requirement is common in
scientific data gathering~\cite{PIL+09}. We have a cost associated
with each hop, which is independent of the number of readings in
it\footnote{This is an acceptable assumption commonly used in the
sensor network community, although more realistic radio model
indicates that packet size does matter~\cite{HCB00}.}. Consequently,
we ask the question: can we pack the readings in common routes to
minimize the number of hops? It is easy to see that the variant of
the capacitated vehicle routing and the sensor network convergecast
problem are equivalent. We define our problems drawing from the
terminology used in sensor networks.

~

\noindent{\bf Background Information:} Along with other Vehicle
Routing Problems (VRP), the capacitated variants of the VRP have
been studied since the 50's~\cite{DanRam59}. With the development of
complexity theory, it became clear that capacitated VRP combines two
different problems, the bin-packing problem and the traveling
salesperson problem, that are independently hard to
solve~\cite{Garey1979} in polynomial time. The capacitated VRP has
continued to receive attention over the decades. Charikar, Khuller
and Raghavachari \cite{ChaKhuRag01} provide a 5-approximation
algorithm for a multi-sink variant of the capacitated vehicle
routing problem that allows vehicles to drop off commodities at
intermediate locations. Readers interested in the history of
capacitated VRP and the various techniques used for solving them can
refer to the excellent book edited by Toth and Vigo~\cite{TotVig01}.
The techniques they cover include exact methods such as
branch-and-bound and branch-and-cut, and also consider set-covering
based algorithms. Additionally, they provide several heuristics and
meta-heuristics that work well in practice. To the best of our
knowledge, we have not encountered the exact variant we study in any
prior work.

The convergecast problem has obtained prominence among sensor
networks researchers because it fits well with the goal of sensor
networks, which is to monitor and collect data about an environment.
The focus has been to either minimize the time, the energy, or the
dual-criteria of both time and energy required to complete the
convergecast~\cite{
GanZhaHua07,HohDohBre04, LinRagSiv02,LuKriRag07,PanTse08,ParHan08, YuPra05,UpaGup07,ZhaGanHua07}.
Researchers have also exploited
spatial locality in many real-life convergecast scenarios by
aggregating the data and transmitting the representative values for
sub-regions within the region being sensed~\cite{GE03,KriEstWic02}.

~

\noindent{\bf Problem Definitions:} We are given a connected
 graph $G=(V \cup \{\sink\}, E)$ that is both
undirected and unweighted. An edge $e=(u,v)\in E$ implies that $u$
can communicate with $v$ and vice versa. Each vertex $v$ has a
single  reading of integral number of bytes $s(v)$ that has to be
reported to the appropriately denoted vertex $\sink$. These readings
must travel to the $\sink$ in packets that have a capacity of $k$
bytes. Since the readings have to fit in the packets, $\forall v \in
V$, $s(v) \leq k$. A packet consumes 1 unit of energy every time it
hops from a vertex to a neighbor regardless of the total size of the
readings in it. Our objective is to minimize the total energy
consumed to send all the readings to the $\sink$. We primarily seek
distributed routing algorithms in which the individual nodes are
unaware of the entire graph; they are only aware of their immediate
neighbors. We call this the {\tt Convergecast Problem} or the \ccp.

The \ccp~combines aspects of both bin-packing and routing. In
Theorem \ref{thm:ccpHard}, we show that it is NP-hard even when the
underlying graphs are restricted to a line or a tree of depth
greater than 1.  So, we limit our study to a simplification in which
the size of each reading is exactly 1 byte. We call this the {\tt
Unit Convergecast Problem} or the \uccp. In practice, many wireless
sensor applications such as room temperature monitoring for energy
conservation only need to deploy simple sensors with one single
sensing attribute.  These sensors then report small constant-sized
readings as directed. In our formulation, we normalize it to one
byte. The \uccp~helps us gain insight when the effect of bin-packing
is minimal because up to $k$ single-byte-sized readings can be
trivially placed into a packet. Interestingly, we show that even
\uccp~is NP-hard.

We now describe two desirable properties that we would like to see
in our solution.
\begin{description}
\item[Shortest Path Property:] An algorithm for \ccp~or \uccp~is said to  follow the
shortest path property if every packet hop always moves the packet
closer to the $\sink$. We refer to algorithms that have this
property as shortest path algorithms. Because we are concerned with
the convergecast problem, this property, when present, will make the
solution more intuitive. This is essentially geographic routing with
greedy forwarding often used in wireless sensor
networks~\cite{AY05}. Note also that even distributed networks, with
a little preprocessing, can easily establish a shortest path tree as
long as the graph is connected.
\item[Elementary Packing Property:] An algorithm for \uccp~is said to have the elementary
packing property if each vertex communicates at most one partial
packet and all the other packets, if any, are full. Such algorithms
are called elementary algorithms. An elementary algorithm ensures
that each node repackages the readings in the most straightforward
manner. It also ensures that communication overhead in the entire network
is minimized.  This is because minimal number of
packets will be used,  leading to minimal total number
of bytes in all the packets is minimal, since each packet has a constant-size
packet header.
\end{description}

In Section~\ref{sec:hard}, we prove that \ccp~is NP-hard even when
the underlying graph is very simple. We then shift our attention to
\uccp~and prove that it is also NP-hard. In Section~\ref{sec:spt},
we develop a $(2-\frac{3}{2k})$-approximation algorithm for \uccp.
It uses the shortest path tree in a very straightforward manner, and
hence, we call it the \spt.  Additionally, it is also an elementary
algorithm.  In Section~\ref{sec:inapprox}, we prove that any
algorithm that either follows the shortest path property or the
elementary packing property cannot guarantee a
$(2-\epsilon)$-approximation for \uccp. In light of this, if we
restrict ourselves to either shortest path algorithms or elementary
algorithms, then \spt~is optimal. In Section \ref{sec:special}, we
explore the performance of \spt~when the underlying graph is either
a tree or a grid and show that it is absolutely optimal in the
former case and asymptotically optimal in the latter case. Finally,
we discuss our experimental results in Section \ref{sec:expt}.

\section{Hardness Results}\label{sec:hard}

In this section we first show that \ccp~is NP-hard even for some of
the simplest trees via a reduction of SET-PARTITION to \ccp. This
result is formalized in Theorem \ref{thm:ccpHard}.

\begin{thm}\label{thm:ccpHard}
\ccp~is NP-hard even if the underlying graph $G$ is a straight line
or a tree of depth at least 2.
\end{thm}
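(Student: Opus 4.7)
The plan is to reduce SET-PARTITION to \ccp. A SET-PARTITION instance consists of positive integers $\{a_1,\ldots,a_n\}$ with $\sum_i a_i = 2T$, and asks whether some $I\subseteq [n]$ has $\sum_{i\in I} a_i = T$. In both target graph families below I set the packet capacity $k=T$, so that the bin-packing bottleneck inside \ccp~directly encodes the partition question.

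\textbf{Depth-$2$ tree.} Build the tree with sink $\sink$, intermediate vertex $v$ adjacent to $\sink$, and $n$ leaves $u_1,\ldots,u_n$ adjacent to $v$. Set $s(u_i)=a_i$ and leave $v$ empty. Each $u_i\to v$ hop costs exactly one packet, contributing $n$ hops. At $v$ the arriving readings may be freely repackaged, so the number of packets crossing $(v,\sink)$ equals the minimum number of size-$T$ bins needed to pack $\{a_1,\ldots,a_n\}$. Since the total is $2T$, this number is at least $2$, and is equal to $2$ precisely when a balanced partition exists. Hence the optimal \ccp~cost is $n+2$ when SET-PARTITION is solvable and at least $n+3$ otherwise, so the decision ``\ccp~cost $\le n+2$?'' decides SET-PARTITION.

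\textbf{Straight line.} Place the readings on $\sink - u_1 - u_2 - \cdots - u_n$ with $s(u_i)=a_i$ and $k=T$. Because readings are atomic but may be repackaged at every intermediate vertex, the optimal number of packets crossing edge $(u_{i-1},u_i)$ (taking $u_0=\sink$) is minimized independently per edge and equals $P_i:=\mathrm{BP}(\{a_i,\ldots,a_n\},T)$, the bin-packing optimum of the suffix. The total \ccp~optimum is $\sum_{i=1}^n P_i$, and again $P_1=2$ iff SET-PARTITION is solvable and $P_1\ge 3$ otherwise. The subtle point, and the main obstacle, is that each $P_i$ for $i\ge 2$ is itself instance-dependent, so the natural threshold is not obviously polynomial-time computable. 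My plan is to restrict the source to a structured SET-PARTITION family in which every suffix achieves its trivial lower bound $\lceil\sum_{j\ge i}a_j/T\rceil$ (for instance, by padding each $a_i$ with a fixed-size tag and raising $T$ accordingly so intermediate suffixes pack tightly, or by reducing from 3-PARTITION whose $(B/4,B/2)$ size constraints force every suffix into the same rigid packing pattern). Under that restriction $\sum_{i\ge 2}P_i$ becomes a polynomial-time computable quantity and the reduction closes, with the technical step that must be nailed down being the NP-hardness of the restricted source problem.
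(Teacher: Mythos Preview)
Your depth-$2$ tree reduction is essentially the paper's: both hang the $n$ partition elements as leaves under a single ``neck'' vertex adjacent to $\sink$, so that the number of neck-to-$\sink$ hops is $2$ versus $\ge 3$ according to whether the partition exists. The only difference is that the paper gives the neck a reading of size $k$ (so the threshold is $n+3$) whereas you leave it empty (threshold $n+2$). If the model insists that every non-$\sink$ vertex carries a reading of size at least $1$, your ``leave $v$ empty'' needs that small patch; otherwise the two arguments are interchangeable.

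For the line you have put your finger on a genuine issue, and in fact the paper's own proof is hand-wavy at exactly this point: it only argues about the final hop into $\sink$ and never explains how to set the decision threshold $H$, which would require knowing $\sum_{i\ge 2}P_i$. Your diagnosis that each $P_i$ is an independent bin-packing value, and hence the total is not obviously computable, is correct. However, your proposed remedy --- passing to a restricted \textsc{Set-Partition} or to \textsc{3-Partition} so that all suffixes pack tightly --- is left as a plan rather than a proof, and making the ``every suffix meets its ceiling bound'' property hold while keeping the source problem NP-hard is nontrivial.

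A much simpler fix closes the gap without changing the source problem. Prepend $m$ extra vertices $w_1,\dots,w_m$ between $\sink$ and $u_1$, each with reading $k$, so the line is $\sink\!-\!w_1\!-\!\cdots\!-\!w_m\!-\!u_1\!-\!\cdots\!-\!u_n$. Across every edge in the $w$-segment the $k$-sized readings each occupy their own packet, so the packet count there is a known constant plus $P_1$; hence the total cost equals a computable quantity plus $(m{+}1)P_1+\sum_{i\ge 2}P_i$. Using only the crude bounds $n-1\le \sum_{i\ge 2}P_i\le \binom{n}{2}$, taking $m=\binom{n}{2}$ makes the $(m{+}1)P_1$ term dominate, and one can write down an explicit polynomial-time threshold $H$ such that cost $\le H$ iff $P_1=2$ iff the partition exists. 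This gives a clean many-one reduction and completes the line case without any appeal to restricted source problems.
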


\begin{proof}
Recall that in SET-PARTITION, we are given a set $U=\{x_1, x_2,
\cdots, x_\vn\}$ of integers. The question we ask is whether $U$ can
be partitioned into two subsets such that the sums of the elements
in either subsets are equal. SET-PARTITION is known to be
NP-complete~\cite{Garey1979}.

We can reduce an instance of SET-PARTITION to \ccp~in two very
simple ways as shown in Figure~\ref{fig:ccpNPhard}, which
illustrates the case when the instance of SET-PARTITION has 8
elements. We assume, without loss of generality, that the elements
of SET-PARTITION are  integral values between 1 and $k$ and add up
to $2k$. To reduce from SET-PARTITION to \ccp, we take each element
of the set and form an instance of \ccp~in which each element of $U$
forms a reading in \ccp~and is assigned to a node in \ccp.

\begin{figure}[!htb]
\centering \ifpdf
    \includegraphics[width=5in]{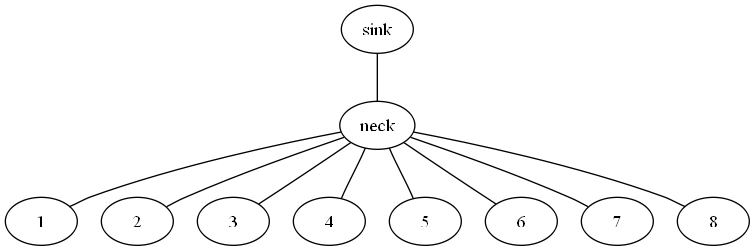} \\ \vspace{0.25in}
    \includegraphics[width=4in]{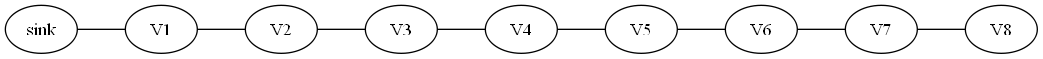}

\else
    \includegraphics[width=5in, bb=0 0 1043 59]{line.png}\\ \vspace{0.25in}
    \includegraphics[width=4in, bb=0 0 755 251]{treeLB.png}
\fi \caption{The two figures illustrate the reductions from
SET-PARTITION to \ccp.} \label{fig:ccpNPhard}
\end{figure}

In the case of the tree of depth 2, we include a ``neck" vertex
which is assigned a reading of size $k$. The other nodes with
readings assigned to them from SET-PARTITION are of degree 1 and are
connected to the neck. The neck is connected to the $\sink$. The
number of hops from the {\em neck} into the $\sink$ vertex will
depend on whether the SET-PARTITION instance can be partitioned into
two subsets.

Similarly, in the case of the line, the nodes form a linear chain
with one end connected to the $\sink$. Starting from the node
farthest away from the $\sink$, the readings travel toward the
$\sink$. At some point, there will be enough readings to require
exactly 2 packets for any reasonable algorithm. Note that the
$\sink$ has exactly one neighbour. Once all the readings reach that
neighbour, we will need either 2 or 3 packets to hop into the
$\sink$ depending on whether we can partition the set $U$ or not.
\end{proof}

We now turn our attention to \uccp. Interestingly, we show that even
\uccp~is NP-hard by reducing the set cover problem to it. In the
classic Set Cover Problem  we are given a ground set
$U=\{x_1,x_2,\ldots,x_{\vn}\}$ and a family of subsets
$S=\{S_1,S_2,\ldots,S_\vm\}$, $S_i \subseteq U$ for
$i=1,2,\ldots,\vm$. $C \subseteq  S$ is a cover if the union of
elements in $C$ is $U$. The goal is to find a cover $C_{min}
\subseteq S$ with the smallest cardinality. It is  well-known that
Set Cover Problem is NP-hard \cite{Garey1979}.

Given an instance of the set cover problem, we construct a sensor
network $T$ consisting of vertices arranged in three levels as
follows (refer Figure \ref{figure1}). Level 1 consists of only the
$\sink$ node. Level 2 nodes correspond to the
 sets $S_i \in S$ for $i=1,2,\ldots,\vm$. There is an edge
from each $S_i$ to $\sink$. We slightly abuse notation and use $S_i$
to also refer to the corresponding vertex. Level 3 consists of nodes
that correspond to  $\{x_1,x_2,\ldots, x_\vn\}$ which are the
elements of set $U$. Like level 2 nodes, we use $x_j$ to refer to a
level 3 vertex. Each node $x_j$ is connected by an edge to $S_i$ iff
the element $x_j \in S_i$ in the Set Cover instance.

We set the size of a packet to $k$ = $\max_i |S_i|$ bytes. We also
add another $k-1$ leaf nodes, which we call enforcers, to each
$S_i$. In Figure \ref{fig:ccpNPhard}, the enforcers are depicted by
a triangular pictorial gadget. Our objective is to solve the
convergecast problem for this setup of sensor networks. i,e. each
non-sink node (including the nodes in levels 2 and 3 and all the
enforcers) have a reading of 1 byte and we must pass each reading to
$\sink$ using the minimum number of packet hops.
\begin{figure}[!htb]
\centering
\ifpdf
    \includegraphics[width=70mm]{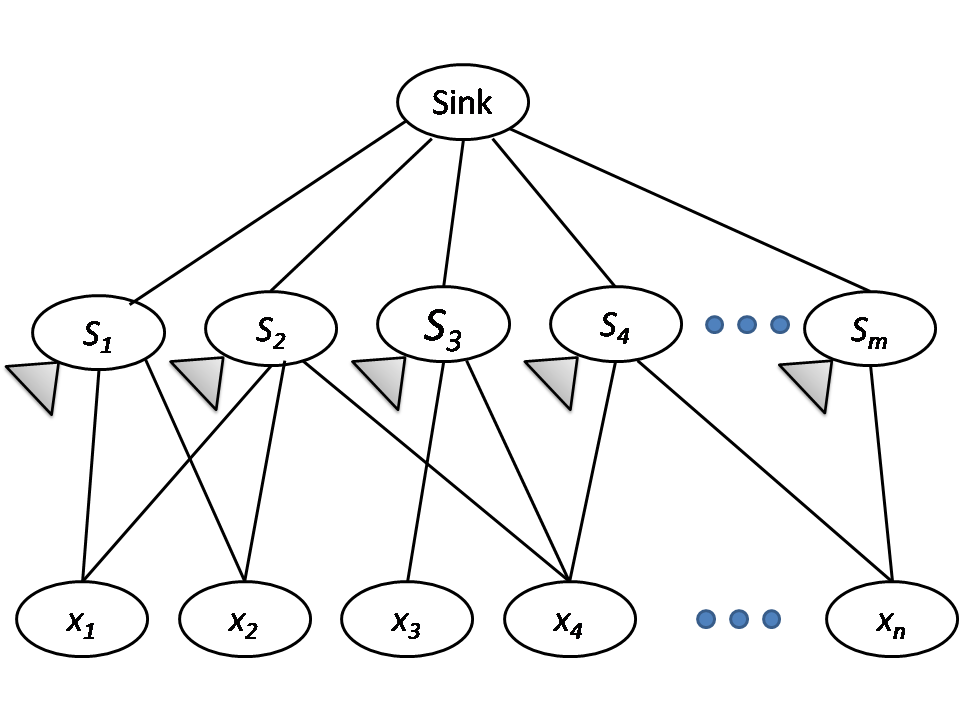} \hspace{0.5in}
    \includegraphics[width=45mm]{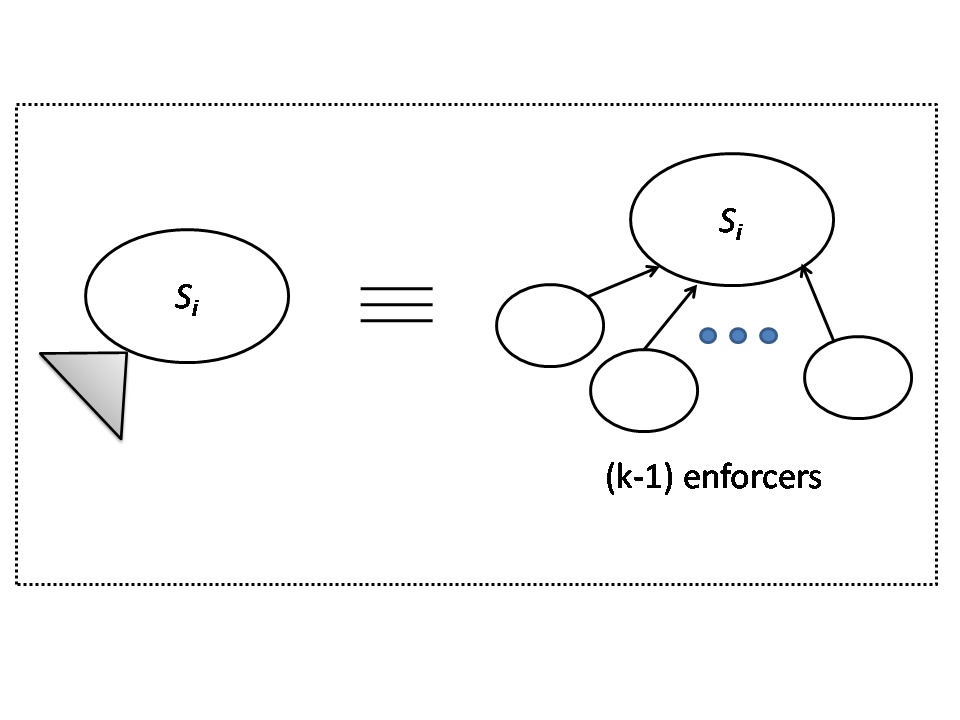}
\else
\includegraphics[width=70mm, bb=0 0 960 720]{NPHard.png} \hspace{0.5in}
    \includegraphics[width=45mm, bb=0 0 960 720]{Gadget.png}
\fi \caption{Reducing the Set Cover problem to the \uccp. The
enforcers are depicted as a triangular pictorial gadget; the actual
construction of the enforcers is shown in the box.} \label{figure1}
\end{figure}

For $K>0$, we can show that $\vn+\vm k + K$ hops suffice to route
each reading to the $\sink$ iff there exists a set cover of size
less than or equal to $K$ in the set cover problem. Each level 3
vertex has to send a packet to $\sink$ through a level 2 vertex.
Note that at least $\vn$ packets must hop out of the level 3
vertices for any solution (optimal or suboptimal). Consider the
portion of the graph consisting of a single level 2 node $S_i$, its
$k-1$ enforcers and $\sink$. Regardless of the activity outside this
portion, any solution requires $k$ hops because the $k-1$ enforcers
must communicate to $S_i$ and we need a packet from $S_i$ to the
$\sink$.
 Since there are $\vm$
such level 2 vertices, the number of hops is at least $\vm k$. If at
least one reading from level 3 vertex will hop through $S_i$, it
will force $S_i$ to
 send one more packet, which we call a {\em critical hop}. If $K$ is
 the number of critical hops, then we can cover the ground set by selecting the subsets
 corresponding to each of the $K$ chosen subsets. Therefore, the
 following theorem follows.
\begin{thm}
\uccp~is NP-hard.
\end{thm}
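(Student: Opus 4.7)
The plan is to establish that the minimum number of hops to solve the \uccp\ on the constructed instance $T$ equals $\vn + \vm k + K^*$, where $K^*$ denotes the size of a minimum set cover of $U$. Once this equality is in hand, the question ``does $T$ admit a convergecast using at most $\vn + \vm k + K$ hops?'' is precisely the question ``is there a set cover of size at most $K$?'', and NP-hardness of \uccp\ follows from NP-hardness of Set Cover.

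For the lower bound I would partition the packet hops in any feasible solution into three groups according to the sending vertex: hops leaving a level 3 vertex $x_j$, hops leaving an enforcer, and hops leaving a level 2 vertex $S_i$. Each $x_j$ must transmit its reading at least once, contributing at least $\vn$ hops of the first kind. Each of the $\vm(k-1)$ enforcers must also transmit at least once, giving at least $\vm(k-1)$ hops of the second kind. For the third kind, fix $i$ and let $\ell_i$ be the number of distinct level 3 readings that ever reach $S_i$. The collection of readings that must leave $S_i$ toward $\sink$ consists of $S_i$'s own reading, its $k-1$ enforcer readings, and these $\ell_i$ level 3 readings, totalling $k + \ell_i$ distinct readings. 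Since each packet has capacity $k$, $S_i$ must send at least $\lceil (k+\ell_i)/k \rceil$ packets; and since $\ell_i \le |S_i| \le k$ by the choice $k = \max_i |S_i|$, this quantity equals $1$ when $\ell_i = 0$ and $2$ when $\ell_i \ge 1$. Writing $K' = |\{i : \ell_i \ge 1\}|$, the third group contributes at least $\vm + K'$ hops, giving a grand total of at least $\vn + \vm k + K'$.

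The next observation is that $\mathcal{C}' = \{S_i : \ell_i \ge 1\}$ forms a set cover of $U$: every $x_j$'s first hop lands at some neighbour $S_i$, and the construction places an edge $x_j \sim S_i$ only when $x_j \in S_i$, so $\mathcal{C}'$ covers every element. Hence $K' \ge K^*$, pushing the lower bound to $\vn + \vm k + K^*$. For the matching upper bound, take any cover $\mathcal{C}$ and route each $x_j$ directly to some $S_i \in \mathcal{C}$ with $x_j \in S_i$, let each enforcer send its single packet to its $S_i$, and let every $S_i$ forward its pile to $\sink$ in $\lceil(k+\ell_i)/k\rceil \in \{1,2\}$ packets. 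The hop count tallies to exactly $\vn + \vm(k-1) + \vm + |\mathcal{C}| = \vn + \vm k + |\mathcal{C}|$, matching the lower bound when $|\mathcal{C}| = K^*$.

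The main obstacle is making the per-$S_i$ lower bound watertight against arbitrary routing choices, including wasteful detours in which a packet is sent from $S_i$ to one of its leaves and bounced back, or in which a level 3 reading is routed through several level 2 vertices in turn. The key observation is that every such detour strictly inflates the hop count without reducing the number of distinct readings that must eventually leave $S_i$, so the capacity-$k$ bound $\lceil(k+\ell_i)/k\rceil$ on out-packets of $S_i$ is preserved; and regardless of how many $S_i$'s a level 3 reading transits, its very first hop already produces an $S_i$ with $x_j \in S_i$ that contributes to the witness cover $\mathcal{C}'$. Once these points are pinned down, the identity $\opt_T = \vn + \vm k + K^*$ holds and the theorem follows.
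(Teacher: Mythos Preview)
Your proposal is correct and follows essentially the same approach as the paper: the same three-level construction with enforcers, the same baseline count of $\vn + \vm k$ hops, and the same identification of the extra ``critical'' hops out of level-2 vertices with a set cover. Your treatment is more careful than the paper's in explicitly bounding the packet hops out of each $S_i$ by $\lceil (k+\ell_i)/k\rceil$ and in addressing detours, but the underlying argument is identical.
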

\section{The Shortest Path Tree (\spt) Algorithm} \label{sec:spt}

In this section, we present an algorithm that we call the Shortest
Path Tree Algorithm, or \spt, because it builds a shortest path tree
and only uses the edges in that tree. It is arguably the simplest
algorithm that uses the shortest path and follows elementary
packing. Therefore, it lends itself naturally to a distributed
implementation.

The steps in the \spt~algorithm are as follows. In the preliminary
phase, we find a shortest path tree $T$ of graph G rooted at
$\sink$. As a consequence, each node is aware of its parent and
children. Subsequently, each vertex waits till it has received all
packets from its children in $T$. Full packets are sent to its
parent as is. All the partial packets are re-packaged into the
maximum number of full packets and at most one partial packet and
all these packets are sent to the parent.

Let $T$ denote a shortest path tree rooted at $\sink$. Now we will
devise an algorithm that will use only the edges of $T$ to send the
reading of each node to $\sink$. Let $\opt$ and $\A$ be the number
of hops taken by the optimum solution and our algorithm,
respectively, in solving an instance of the \uccp. We  show that
${\A} \leq (2-\frac{3}{2k}) \times \opt$.

The maximum number of readings that can be packed in a packet is
$k$. If a packet contains $k$ readings then we call it a {\it full
packet}; otherwise, it is a {\it partial packet}. If a full packet
hops from a node $a$ to a neighbouring node $b$ then we will term
this as {\it full hop}. A partial hop is defined likewise.  We split
$\opt$ into $\opt^f$ and $\opt^p$ such that they are the number of
full and partial hops, respectively. We define $\A^f$ and $\A^p$ in
like manner. Naturally,
\begin{eqnarray}
  \opt &=& \opt^f + \opt^p \label{eqn:opt}\\
  \A &=& \A^f + \A^p. \label{eqn:spt}
\end{eqnarray}
Let us define the depth $d(v)$ of a node $v$ as the shortest
distance of a node $v$ from $\sink$ in $T$, i.e., the minimum number
of hops required for a reading to reach $\sink$ from $v$.


\begin{lem}
For any instance of the \uccp, $\A^p \leq 2 \cdot \opt^p$.
\label{lem:partials}
\end{lem}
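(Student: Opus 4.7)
My plan is to lower-bound $\opt^p$ by a cut-based argument and then push the estimate to a factor of $2$.

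First, I would characterize $\A^p$ exactly. Because \spt\ uses elementary packing over the shortest-path tree $T$, a vertex $v$ emits a partial packet to its parent iff the size $n_v := |T_v|$ of its rooted subtree in $T$ is not a multiple of $k$: full packets arriving from descendants are forwarded verbatim, while all partial content (including $v$'s own reading) is re-packed into as many full packets as possible plus at most one partial packet. Letting $V^{\mathrm{par}} := \{v \in V : n_v \not\equiv 0 \pmod{k}\}$, we therefore have $\A^p = |V^{\mathrm{par}}|$.

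Next, I would lower-bound $\opt^p$ using cuts. Assume without loss of generality that OPT routes each reading along a simple path in $G$ ending at $\sink$, since any wasteful detour or oscillation only increases $\opt$. For each $v \in V^{\mathrm{par}}$, consider the edge-cut $\delta_G(T_v) := \{(x,y) \in E : |\{x,y\} \cap T_v| = 1\}$. All $n_v$ readings originating in $T_v$ must cross this cut exactly once, so the packet-sizes summed over its crossings equal $n_v$; if every crossing packet were full the sum would be a multiple of $k$, contradicting $n_v \not\equiv 0 \pmod{k}$. Hence OPT has at least one partial hop crossing $\delta_G(T_v)$.

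I would conclude by a charging argument: define a map $\phi$ from $V^{\mathrm{par}}$ to the partial hops of OPT that sends each $v$ to one partial hop of OPT crossing $\delta_G(T_v)$, and show the preimage of any hop has size at most $2$. Observe that an OPT hop $(u,w)$ lies in $\delta_G(T_v)$ iff $v$ is a strict ancestor in $T$ of exactly one of $u,w$, i.e., $v$ lies on the open tree-path from $u$ to $w$; for a tree edge $(u,\mathrm{parent}(u))$ only $v=u$ qualifies, giving multiplicity one immediately.

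The main obstacle is the last step for non-tree hops, since the tree-path between endpoints of a non-tree edge can a priori contain many partial vertices. Handling this appears to require a careful choice of $\phi$---for instance, assigning each $v$ to the partial hop of OPT nearest to $v$ along its own outgoing route---so that each OPT hop is claimed by at most its two ``endpoint-like'' partial vertices. Alternatively, the factor $2$ may emerge from combining the cut bound with a complementary estimate derived from the reading-hop identity $k\A^f + \sum_{v \in V^{\mathrm{par}}}(n_v \bmod k) = \sum_v d(v) \leq k\opt^f + (k-1)\opt^p$, together with the elementary observation that $n_v \bmod k \geq 1$ for every $v \in V^{\mathrm{par}}$.
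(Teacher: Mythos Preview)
Your proposal is incomplete, and the route you are taking is considerably harder than necessary. You correctly identify the obstacle yourself: the charging map $\phi$ from $V^{\mathrm{par}}$ to partial hops of \opt\ need not have preimages of size at most~$2$. For a non-tree hop $(u,w)$ the open tree-path between $u$ and $w$ can contain arbitrarily many vertices of $V^{\mathrm{par}}$, and you give no concrete rule for $\phi$ that avoids over-charging. Your fallback suggestion---combining the cut bound with the reading-hop inequality $k\A^f+\sum_{v\in V^{\mathrm{par}}}(n_v\bmod k)\le k\,\opt^f+(k-1)\opt^p$---does not yield $\A^p\le 2\,\opt^p$ either; that inequality is essentially the content of Lemma~\ref{lem:readingMovement}, not of Lemma~\ref{lem:partials}, and using only $n_v\bmod k\ge 1$ recovers Lemma~\ref{lem:readingMovement} verbatim without isolating any factor-$2$ relation between $\A^p$ and $\opt^p$.

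The paper's proof bypasses cuts and charging entirely with two crude but sufficient bounds. First, $\A^p\le n$ because in \spt\ each vertex emits at most one partial packet (you already have this, since $|V^{\mathrm{par}}|\le n$). Second, $\opt^p\ge n/2$, proved by the following parity observation: at \emph{every} non-sink vertex $v$, any routing (optimal or not) must have at least one partial hop incident to $v$---if all hops into $v$ were full, say $\ell$ of them, then $k\ell+1$ readings (the incoming content plus $v$'s own unit reading) must leave $v$, forcing a partial outgoing hop. Since each partial hop is incident to at most two non-sink vertices, the $n$ required incidences force at least $n/2$ partial hops. Combining $\A^p\le n$ with $\opt^p\ge n/2$ gives the lemma immediately. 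The refinement to $|V^{\mathrm{par}}|$ is never needed.
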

\begin{proof}
Consider the packets that flow through a single vertex $v$ according
to any algorithm regardless of optimality. There is at least one
partial hop either out of $v$ or into $v$. We can prove this by
contradiction. Suppose there were no partial hops into $v$, but
$\ell$ full hops into $v$. Then, $k\cdot \ell +1$ readings would
have to hop out of $v$, which requires at least one partial hop.
This implies that at least $n/2$ hops are partial even for an
optimal algorithm. Therefore,
\begin{equation}\label{eqn:partialopt}
    \opt^p \geq n/2.
\end{equation}
According to the \spt~algorithm, each vertex waits for all its
children to communicate their packets and reorganizes the readings
such that at most one packet is not full. Therefore, $\A^p \leq n$,
 which, along with Equation \ref{eqn:partialopt}, completes the proof.
\end{proof}

Before we proceed into proving our theorem, we point out an obvious
property (formalized in Lemma~\ref{lem:readingMovement}) of any
algorithm that obeys the shortest path property, the \spt~being one
such algorithm. The reading corresponding to each vertex $v$ travels
a distance of exactly $d(v)$, which is the shortest distance to
reach the $\sink$. Therefore, the sum of all the distances traveled
taken over all {\em readings} (not packets) by \spt~is less than or
equal to that of any other algorithm. That sum is at least $\A^p + k
\A^f$ for \spt; we pessimistically account only one reading to have
hopped in each partial packet. Similarly, the sum of the distance
moved by {\em readings} according to an optimal algorithm is at most
$(k-1) \opt^p + k \opt^f$; we liberally account for $k-1$ readings
in each partial hop. Therefore, we can state the property as
follows:

\begin{lem}\label{lem:readingMovement}
For any instance of the \spt, $ \A^p + k \A^f \leq (k-1) \opt^p + k
\opt^f$.
\end{lem}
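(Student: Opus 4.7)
The plan is to prove the lemma by a double-counting argument on the \emph{reading-distance} — the sum, taken over all readings, of the number of hops each individual reading takes on its way to $\sink$. First I would show that every algorithm has reading-distance at least $\sum_{v\in V} d(v)$, since the reading originating at $v$ must traverse at least $d(v)$ edges in the unweighted graph $G$ to reach $\sink$. Because \spt~obeys the shortest path property, each reading in \spt~travels along a path of length exactly $d(v)$, so the \spt~reading-distance equals this lower bound; in particular, the \spt~reading-distance is no larger than the \opt~reading-distance.

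Next I would re-express the reading-distance of each algorithm in terms of its full and partial hop counts, by charging each hop with the number of readings it carries. For \spt, a full hop contributes exactly $k$ and a partial hop contributes at least $1$, so the \spt~reading-distance is at least $\A^p + k\A^f$. For \opt, a full hop contributes exactly $k$ and a partial hop contributes at most $k-1$, so the \opt~reading-distance is at most $(k-1)\opt^p + k\opt^f$. Chaining these three inequalities yields the claim.

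The main obstacle is really just bookkeeping care: I need the pessimistic estimate of one reading per partial hop on the \spt~side to obtain a genuine lower bound, and the optimistic estimate of $k-1$ readings per partial hop on the \opt~side to obtain a genuine upper bound, so that the two one-sided inequalities point in the correct direction when composed with the reading-distance comparison. No deeper insight is required; the argument essentially formalizes the reasoning sketched in the paragraph immediately preceding the lemma statement, with the only fact about $G$ being used the observation that distances in an unweighted graph coincide with hop counts.
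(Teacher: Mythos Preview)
Your proposal is correct and matches the paper's own argument essentially line for line: the paper also compares the total reading-distance of \spt~(which equals $\sum_v d(v)$ by the shortest path property) to that of \opt, lower-bounding the former by $\A^p + k\A^f$ and upper-bounding the latter by $(k-1)\opt^p + k\opt^f$. You have correctly identified and carried out the exact bookkeeping the paper sketches in the paragraph preceding the lemma.
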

\begin{thm}\label{thm:uccp}
For any instance of~\uccp, $\A \leq (2-\frac{3}{2k}) \opt$.
\end{thm}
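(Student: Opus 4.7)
My plan is to combine Lemmas~\ref{lem:partials} and~\ref{lem:readingMovement} by routine algebra. First, I would take Lemma~\ref{lem:readingMovement} and isolate $k\A^f$, obtaining $k\A^f \leq k\opt^f + (k-1)\opt^p - \A^p$. Adding $k\A^p$ to both sides gives
\[
k\A \;=\; k(\A^f + \A^p) \;\leq\; k\opt^f + (k-1)\opt^p + (k-1)\A^p.
\]
I would then invoke Lemma~\ref{lem:partials} to substitute $\A^p \leq 2\opt^p$, producing $k\A \leq k\opt^f + 3(k-1)\opt^p$; that is, the intermediate estimate
\[
\A \;\leq\; \opt^f + \frac{3(k-1)}{k}\,\opt^p.
\]

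To finish, I would recast this in terms of $\opt = \opt^f + \opt^p$. A short computation rewrites the intermediate estimate as
\[
\A - \Bigl(2 - \tfrac{3}{2k}\Bigr)\opt \;\leq\; \frac{2k-3}{2k}\bigl(\opt^p - \opt^f\bigr),
\]
so the claim is immediate in the regime $\opt^p \leq \opt^f$. The main obstacle will be the complementary regime $\opt^p > \opt^f$, where Lemma~\ref{lem:partials} leaves substantial slack. In that regime, Lemma~\ref{lem:readingMovement} combined with $\A^f \geq 0$ also yields the alternative estimate $\A^p \leq (k-1)\opt^p + k\opt^f$, which is strictly tighter than $2\opt^p$ precisely when $\opt^p$ dominates $\opt^f$. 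Feeding this tighter bound on $\A^p$ back into the same derivation gives $\A \leq k\opt^f + (k-1)\opt^p$, and a case analysis on the ratio $\opt^p/\opt^f$ should then patch the two sub-bounds together to give $\A \leq (2 - 3/(2k))\,\opt$ uniformly. I expect the delicate step will be verifying that the two sub-cases combine seamlessly across their common boundary.
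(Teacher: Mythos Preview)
Your case split breaks down for $k\ge 3$. The ``alternative estimate'' $\A^p \le (k-1)\opt^p + k\opt^f$ is \emph{never} tighter than $2\opt^p$ in that range: $(k-1)\opt^p + k\opt^f < 2\opt^p$ is equivalent to $k\opt^f < (3-k)\opt^p$, whose right-hand side is nonpositive once $k\ge 3$. So in the regime $\opt^p > \opt^f$ you have no improvement to feed back, and the second sub-bound $\A \le k\opt^f + (k-1)\opt^p$ is itself at least $(k-1)\opt$, far worse than the target $(2-\tfrac{3}{2k})\opt$.

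More fundamentally, Lemmas~\ref{lem:partials} and~\ref{lem:readingMovement} by themselves cannot yield the theorem. If you maximize $\A$ subject only to those two inequalities (and nonnegativity), the worst case is $\opt^f=0$, $\A^p=2\opt^p$, giving $\A \le \tfrac{3(k-1)}{k}\opt$, which exceeds $(2-\tfrac{3}{2k})\opt$ for every $k\ge 2$. The paper closes the gap with a third ingredient you have not invoked: since \spt\ emits at most one partial packet per vertex, $\A^p \le n$, and since every vertex must send at least one packet, $\opt \ge n$; together, $\A^p \le \opt$. The paper's derivation reaches $k\A \le k\opt + (k-\tfrac{3}{2})\A^p$ (using Lemma~\ref{lem:partials} in the form $\opt^f \le \opt - \A^p/2$) and then replaces $\A^p$ by $\opt$ via this third bound. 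Your argument needs exactly this step; without it the case analysis cannot be patched.
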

\begin{proof}
Using Equations \ref{eqn:opt} and \ref{eqn:spt}, we rewrite the
equation in Lemma \ref{lem:readingMovement} as
\begin{eqnarray*}
  k \A &\leq& (k-1) \opt + \opt^f + (k-1) \A^p \\
   &\leq& (k-1) \opt + \opt^f + \opt^p +  (k-1) \A^p - \A^p/2
                \quad \mbox{(using Lemma \ref{lem:partials})}\\
   &=& k \cdot \opt + (k-3/2) \A^p.
\end{eqnarray*}
Recall that $\A^p \leq n$. Hence, we can replace $\A^p$ with $\opt$
because $\opt$ is at least $n$; every vertex has to send out at
least one packet. Further, dividing by $k$ on both sides, we get $\A
\leq (2-\frac{3}{2k}) \cdot \opt$.
\end{proof}
Theorem~\ref{thm:uccp} proves the upper-bound for \spt, but the
underlying lemmas, Lemma~\ref{lem:partials} and
Lemma~\ref{lem:readingMovement}, are true for larger classes of
algorithms. Lemma~\ref{lem:partials} hold for any algorithm that
packs its readings in an elementary manner and
Lemma~\ref{lem:readingMovement} is true for any algorithm that
respects the shortest path property. Therefore we can state:
\begin{cor}\label{cor:ESP}
The approximation ratio of any algorithm in the class of algorithms
that obey the shortest path property and the elementary packing
property is at most $(2-\frac{3}{2k})$.
\end{cor}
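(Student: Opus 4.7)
The plan is to observe that the proof of Theorem~\ref{thm:uccp} nowhere invokes the full algorithmic description of \spt; it uses only the conclusions of Lemma~\ref{lem:partials} and Lemma~\ref{lem:readingMovement}. So to establish Corollary~\ref{cor:ESP} it suffices to show that each of these lemmas continues to hold under the weaker hypotheses stated in the corollary, and then replay the same chain of inequalities with $\A$ now denoting an arbitrary algorithm from the stated class.

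First I would revisit Lemma~\ref{lem:partials}. The lower bound $\opt^p\geq n/2$ derived there comes from a per-vertex pigeonhole argument (at each vertex $v$, either an incoming or an outgoing hop must be partial) which is a structural property of the instance itself and uses nothing about the algorithm under consideration — so it applies to \opt~without change. The only place the algorithm enters is the bound $\A^p\leq n$, and the justification used there was precisely that each vertex sends at most one partial packet. That is the elementary packing property, so the lemma holds for any elementary algorithm.

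Next I would revisit Lemma~\ref{lem:readingMovement}. Its proof is a double counting of the sum, over all \emph{readings}, of the distance each reading travels. The bound $\A^p+k\A^f$ from below and $(k-1)\opt^p+k\opt^f$ from above on this quantity follow just from the packet capacities and the definitions of full/partial hops; the inequality between the two sides is where the algorithm enters, and it is justified by the observation that under the shortest path property every reading originating at $v$ travels exactly $d(v)$ hops, which is the minimum possible. Thus the lemma holds for any algorithm that obeys the shortest path property.

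Finally, for $\A$ in the stated class (both properties hold), I would simply repeat the short derivation from the proof of Theorem~\ref{thm:uccp}: combine Equations~\ref{eqn:opt} and~\ref{eqn:spt} with Lemma~\ref{lem:readingMovement} to get $k\A\leq(k-1)\opt+\opt^f+(k-1)\A^p$, then use Lemma~\ref{lem:partials} to replace part of $(k-1)\A^p$ by $-\A^p/2+\opt^p$, and finally absorb the residual $\A^p$ using $\A^p\leq n\leq \opt$ (which itself follows from elementary packing together with the trivial bound that \opt~must perform at least one hop per vertex). Dividing by $k$ yields the claim. There is no real obstacle here — the only thing to verify carefully is that every step in the original proof survives the generalization, and the previous two paragraphs handle exactly that.
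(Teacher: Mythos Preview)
Your proposal is correct and mirrors the paper's own argument exactly: the paper justifies the corollary by observing that Lemma~\ref{lem:partials} only needs the elementary packing property and Lemma~\ref{lem:readingMovement} only needs the shortest path property, so the proof of Theorem~\ref{thm:uccp} goes through verbatim for any algorithm in the stated class. If anything, you spell out the verification in more detail than the paper does.
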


Note that in \spt, each node sends its packets to one of its
parents. In practice, we might not want to burden one parent. This
can be alleviated by choosing a parent randomly. Alternatively, the
node can also choose a parent in a round-robin fashion. Corollary
\ref{cor:ESP} ensures that such variants will not incur a higher
hop-count than \spt. This can be of use to systems designers who are
interested in balancing the network overhead across the network
without compromising the hop-count.

\section{Lower Bounds on Approximating \uccp}
\label{sec:inapprox} Given the upper-bound on the approximation
ratio of \spt~in Theorem~\ref{thm:uccp}, a natural question we ask
is whether the analysis can be tightened. We are, however,
interested in algorithms that use shortest paths and employ
elementary packing. In this subsection, we discuss the
inapproximability of \uccp~when either one of those two properties
must be respected.
\begin{figure}[!htb]
\centering \ifpdf
    \includegraphics[width=120mm]{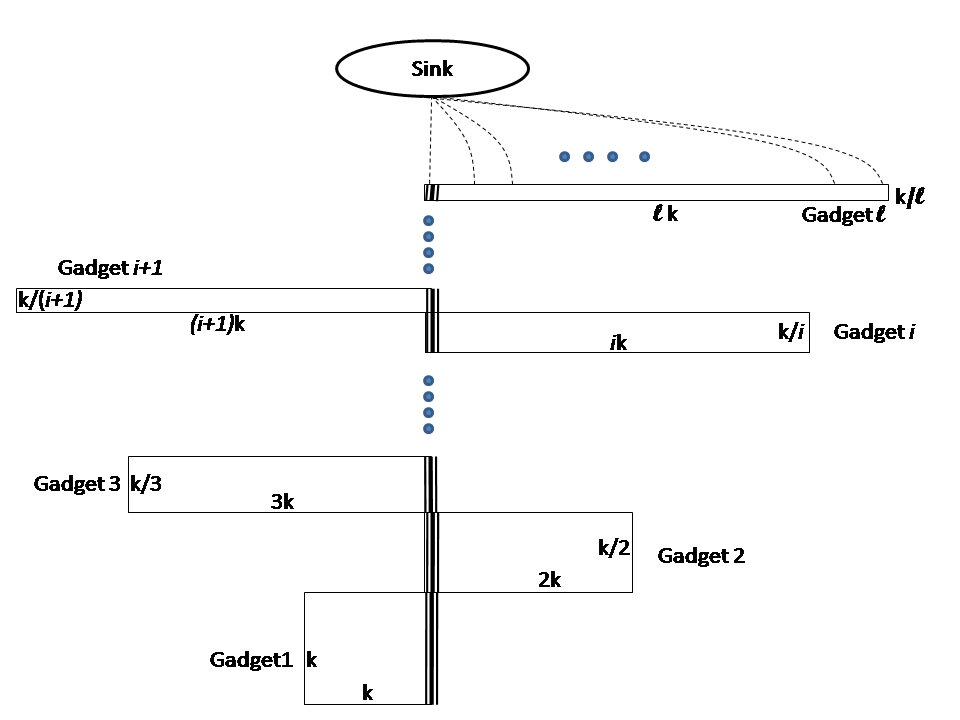}
\else
    \includegraphics[width=120mm, bb=0 0 960 720]{LBConst.png}
\fi \caption{The construction of an instance of \uccp~used to prove
Theorem~\ref{thm:inapproxSP} and Theorem~\ref{thm:inapproxElem}.
Note that the boxes are gadgets shown in Figure \ref{fig:gadget}.}
\label{fig:const}
\end{figure}

\begin{figure}[!htb]
\centering \ifpdf
    \includegraphics[width=130mm]{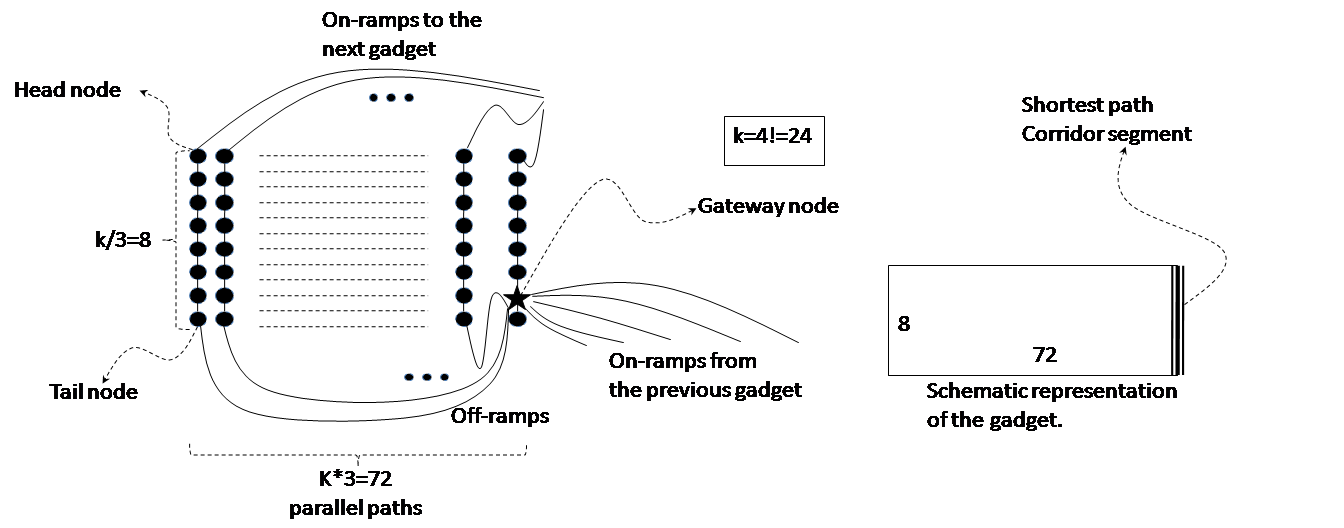}
\else
    \includegraphics[width=150mm, bb=0 0 1344 528]{LBGadget.png}
\fi \caption{A gadget for constructing the instance $\El$ of \uccp.
The schematic representation of the actual gadget is also provided
in the bottom right.} \label{fig:gadget}
\end{figure}

We begin by describing the construction of an instance $\El$ of the
\uccp, where $\ell$ is a positive integer. This instance is
constructed with one bad path (called the shortest path corridor or
\spc) to the $\sink$ such that an optimal algorithm can avoid it to
minimize the number of hops. However, in the construction, we ensure
that an algorithm that does not compromise on either the shortest
path property or the elementary packing property cannot avoid the
\spc~and therefore must hop more.

The instance $\El$ will consist of $\ell$ gadgets (shown in
Figure~\ref{fig:const}). The gadgets are indexed by $i$, $1\leq i
\leq \ell$. Gadget 1 is farthest away from the $\sink$ and gadget
$\ell$ is closest to it.
 Figure~\ref{fig:gadget} depicts the detailed construction of a
 single
gadget. Two consecutive gadgets will be connected as shown in
Figure~\ref{fig:gadgetConnection}. Note that gadget $\ell$ connects
to the $\sink$ (see Figure~\ref{fig:const}). The position of the
$\sink$ and the orientation of the instance depicted in
Figure~\ref{fig:const} indicates that the packets move ``upward."

Given the value of $\ell$, we define the size of each packet to be
$k=\ell!$. We first describe a generic gadget that is used in
constructing each of the $\ell$ gadgets. Figure~\ref{fig:gadget}
depicts the construction of a gadget $i$; the figure shows the
actual construction and a schematic representation, which will be
used subsequently. A gadget is defined by parameters $i$, its gadget
index, and $k$, the capacity of the packets. It consists of $ik$
parallel paths that are disconnected from each other (except for
some special edges called off-ramps described later). Each of these
$ik$ paths consists of $k/i$ nodes; note that $k/i$ is an integer
because $k=\ell!$ and $i \leq \ell$. Therefore, each gadget has
$k^2$ nodes. The two end nodes in each of the paths is designated
either as a head node or the tail node depending on whether it is
closer to or farther away from the $\sink$, respectively.
Furthermore, one of the $ik$ paths is a special path that is called
a ``segment of the shortest path corridor"  and is shown by thick
triple lines in the schematic. When the gadgets are put together to
form the entire instance, these segments will join to form a
sequence of segments from the farthest gadget (away from the
$\sink$) all the way to the $\sink$. This sequence of segments form
the shortest path corridor or \spc.

In each gadget, the node connected to the tail node of the segment
of the \spc~plays a special role; in Figures \ref{fig:gadget} and
\ref{fig:gadgetConnection}, they are depicted as star shaped nodes.
We call them gateway nodes because all packets enter a gadget
through its gateway node. Borrowing from the terminology used in
highways in the United States, the $(i-1)k$ edges coming into the
gateway node from gadget $i-1$ are called on-ramps. There are $ik-1$
edges going from the gateway node to the tails in the gadget (except
for the tail of the segment of the \spc). These edges are called
off-ramps. See Figure \ref{fig:gadgetConnection} for a depiction of
 two consecutive gadgets along with how they are
connected; again, the schematic representation is also provided. To
construct the entire instance, the gadgets are placed one on top of
the other such that their individual segments of the shortest path
corridor align and form the full shortest path corridor that extends
from gadget 1 all the way to gadget $\ell$ and then connects to the
$\sink$. This construction of the entire instance is depicted in
Figure \ref{fig:const}.

\begin{figure}[!htb]
\centering \ifpdf
    \includegraphics[width=130mm]{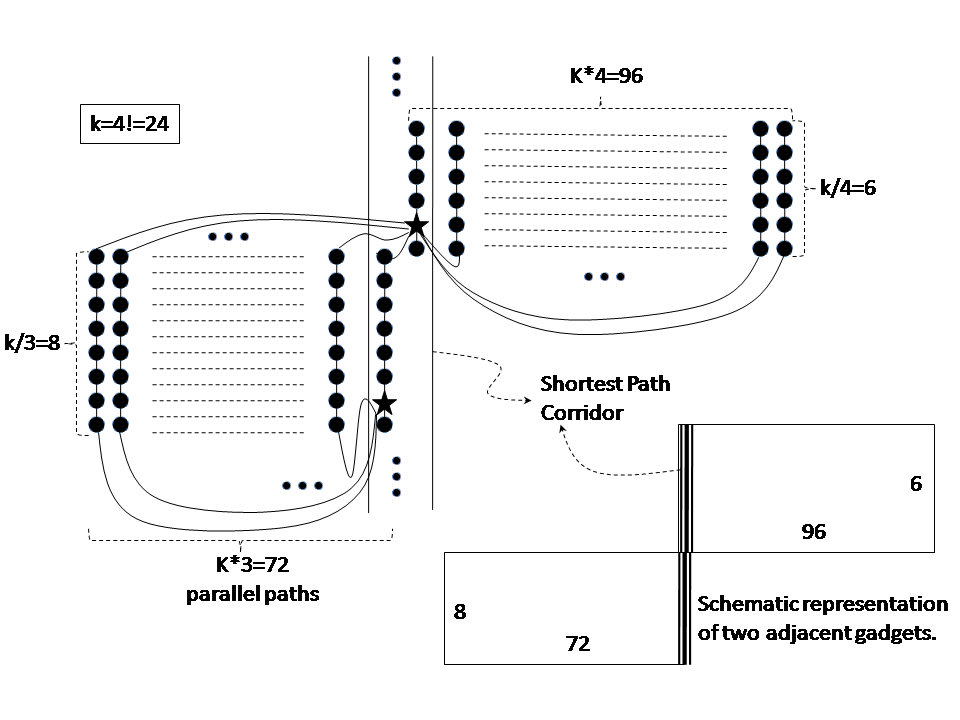}
\else
    \includegraphics[width=130mm, bb=0 0 960 720]{LBGadgetConnection.png}
\fi \caption{Connecting two gadgets in adjacent gadgets. The box
figure on the bottom right is the schematic representation for the
actual construction in the top left.} \label{fig:gadgetConnection}
\end{figure}

\begin{lem}\label{lem:inapproxOPT}
There is a solution to the convergecast problem on the instance
depicted in Figure \ref{fig:const} that hops at most $k^2\ell +
k\ell^2$ times.
\end{lem}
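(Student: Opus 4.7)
The plan is to exhibit an explicit routing strategy in which every one of the $ik$ paths of gadget $i$ carries exactly one full packet of size $k$ at its head, and then to count the resulting hops directly. The driving observation is that the total number of readings traversing gadget $i$ is $ik^2$ (its own $k^2$ plus $(i-1)k^2$ from the gadgets below), which distributed over the $ik$ paths of gadget $i$ is exactly $k$ per path -- one full packet's worth. So there is room to do everything with one full packet per path, and the whole proof becomes a bookkeeping exercise once this is arranged.

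To implement this, I would first handle gadget $1$: each of its $k$ paths already holds $k$ own readings that naturally pack into one full packet at the head, which then hops via its on-ramp to the gateway of gadget $2$. For each gadget $i \geq 2$, the $(i-1)k^2$ readings arriving at its gateway (in $(i-1)k$ full packets from below) are repacked into $ik$ packets of size $k - k/i$, and one such packet is dispatched along each of the $ik-1$ off-ramps and the single edge from the gateway to the \spc~tail. As a packet travels from its tail to its head it absorbs the $k/i$ own readings of that path, arriving at the head as a full packet that is then forwarded via the on-ramp to the gateway of gadget $i+1$ (or directly to $\sink$ when $i = \ell$). The repack is integer-valid because $k = \ell!$ forces $i \mid k$ for every $i \leq \ell$, and capacity is respected since the carried packet grows monotonically from $k - k/i$ at the tail to $k$ at the head.

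Counting hops is then routine. Gadget $1$ contributes $k(k-1) + k = k^2$ hops. For $i \geq 2$, gadget $i$ contributes $ik$ hops from the gateway out to the tails, $ik(k/i - 1) = k^2 - ik$ hops along path edges, and $ik$ outgoing on-ramp hops, totaling $k^2 + ik$. Summing yields $\ell k^2 + k\sum_{i=2}^{\ell} i$, which is at most $k^2\ell + k\ell^2$, as required. The only point to verify carefully -- and really the only obstacle -- is that the gateway repack partitions the incoming count $(i-1)k^2$ correctly into $ik$ integer groups of size $k - k/i$, which the choice $k = \ell!$ guarantees; everything else is straightforward arithmetic.
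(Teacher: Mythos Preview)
Your proposal is correct and follows essentially the same route as the paper: both describe the scheme in which the gateway of gadget $i$ repacks the incoming $(i-1)k^2$ readings into $ik$ partial packets of size $\tfrac{i-1}{i}k$, ships one along each lane so that it fills to capacity by the head, and then counts the resulting $k^2 + O(ik)$ hops per gadget. Your per-gadget tally $k^2+ik$ differs from the paper's $k^2+ik-1$ only because you charge a hop from the gateway to the \spc\ tail whereas the paper omits it; both sums are safely below $k^2\ell + k\ell^2$, and your explicit remark that $i \mid k$ (via $k=\ell!$) makes the repack integral is a point the paper leaves implicit.
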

\begin{proof}
The solution works as follows. Each gadget has $k^2$ nodes.
Therefore, gadgets 1 to $i$ have $ik^2$ readings that enter the
gateway of gadget $i+1$.
Then the gateway node, instead of sending them up the \spc,
redistributes these packets to each of the $(i+1)k$ lanes in the
gadget at level $i+1$. Therefore, each lane gets a packet that
contains $\frac{i}{i+1}k$ readings that travel up each lane
collecting the $k/(i+1)$ readings in that lane. Therefore, at the
top of each lane in gadget $i+1$, the number of readings is
$\frac{i}{i+1} k + \frac{k}{i+1} = k$, hence forming a full packet.
These $(i+1)k$ full packets hop into the gateway at gadget $(i+2)$
and proceed toward the $\sink$ in like manner (i.e., avoiding the
\spc~and taking the lanes). Note that at gadget $i$, the following
hop types occur. Firstly, the gateway node at gadget $i$ feeds
$(i-1)k$ packets
  (that it received from gadget $i-1$) via the off-ramps to the
  tail
  nodes in gadget $i$. This takes $ik -1$ hops; although there
  are $ik$ paths, there is no need for
  a hop from the gateway to the segment of the \spc. Secondly, the
  $ik$ packets travel up the lanes costing $k/i$ hops per
  lane. This adds up to $k^2$ hops. Note that this includes the
  on-ramp hops that will carry the packets from gadget $i$ into the
  gateway of gadget $i+1$.
Therefore, at each level $i$, we incur a cost of $k^2 + ik-1$.
Considering this over all $\ell$ levels, the total cost is at most
$k^2 \ell + \sum_{i=1}^{\ell} (ik-1) \leq k^2\ell + k \ell^2$.
\end{proof}

Note that the cost incurred by the solution described in Lemma
\ref{lem:inapproxOPT} hinges on the ability of the gateway nodes to
pack in a non-elementary fashion. Hence it is not elementary in
nature. Also, since it uses the off-ramps, it is not a shortest path
solution either. We shift our concern to solutions that either use
the shortest path or are elementary in nature. The key intuition
here is that such solutions will transmit all the readings entering
the gadget at level $i$ only through the \spc. While the solution in
Lemma \ref{lem:inapproxOPT} was able to split the $k(i-1)$ full
packets into $ik$ partial packets and ride up the gadget (in some
sense, for free), the restricted solution will have to pay for these
packet hops up the \spc. We dissect this cost in Lemma
\ref{lem:inapproxSP} and Lemma~\ref{lem:inapproxElem}. Before that,
we state Lemma~\ref{lem:UniqueSP}, a simple observation about the
instance $\El$.

\begin{lem}\label{lem:UniqueSP}
The tail nodes (except those of the \spc~segments) have exactly two
shortest paths to the $\sink$. All other nodes (including the tail
nodes of \spc~segments) have exactly one shortest path to the
$\sink$.
\end{lem}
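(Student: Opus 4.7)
The plan is to prove the lemma by performing a BFS from $\sink$, then translating, at each node, the local count of strictly closer neighbors into a global count of shortest paths, inductively from the sink outward.

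First, I would record the depth $d(v)$ for every node class in $\El$. Using the layered gadget structure, every head in gadget $i$ sits at a common depth $D(i)$ satisfying $D(\ell) = 1$ and a recurrence of the form $D(i) = D(i+1) + k/(i+1) + 1$, obtained by tracing the unique shortest escape from the gateway of gadget $i+1$ up its \spc~segment to the next head. The depths of internal lane nodes then step down from $D(i)$ toward the tail, and the gateway of gadget $i$ together with the tail of its \spc~segment occupy positions determined by the construction of the corridor.

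With the depths in hand, I would verify the shortest-path counts case by case. A head has exactly one strictly closer neighbor, namely the on-ramp into the next gateway (or $\sink$ itself in gadget $\ell$), so its shortest path is unique. An internal lane node has only its up-neighbor closer, so unique. A tail of an \spc~segment has only its up-neighbor closer, since the detour through the gateway is strictly longer, so unique as well. A gateway node has a unique closer neighbor along the \spc~corridor, so unique. Finally, for a tail $t$ of a non-\spc~lane in gadget $i$, exactly two neighbors lie at depth $d(t) - 1$: the next-up node in $t$'s own lane, and the gateway reached through the off-ramp; since each of these neighbors has, by the previous cases, a unique shortest path to $\sink$, $t$ has exactly two shortest paths.

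The main obstacle is the distance equality underlying the non-\spc~tail case: the up-lane path from $t$ and the off-ramp-through-gateway-then-up-\spc~path from $t$ must be shown to have exactly the same length. This identity depends delicately on the number $k/i$ of nodes per lane, on how off-ramps are wired into the gateway, and on how consecutive \spc~segments chain into a single shortest-path corridor to the sink. Once this identity is checked against the construction in Figures~\ref{fig:gadget} and \ref{fig:gadgetConnection}, the BFS-based case analysis above closes out the lemma.
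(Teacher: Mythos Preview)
Your proposal is correct and follows essentially the same approach as the paper's proof: identify that non-\spc~tails have two shortest-path choices (up their own lane, or via the off-ramp into the \spc) while every other node has only one. The paper's proof is a terse two-sentence assertion of precisely these two cases by inspection of the construction; your BFS/depth framework and your explicit flagging of the distance-equality check for the two tail routes add rigor the paper omits, but the underlying argument is the same.
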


\begin{proof}
The tail nodes that are not in the \spc~segments can  go through the
gadget in two ways. They can either go via the off-ramps into the
\spc, or go through the paths for which they are the tails. All
other nodes, it is easy to see, have just one choice.
\end{proof}

The \spt~incurs a higher hop count than the algorithm described in
the proof of Lemma \ref{lem:inapproxOPT}.
Lemmas~\ref{lem:inapproxSP} and \ref{lem:inapproxElem} formalize
this limitation of \spt. The proofs of either lemmas show that their
respective assumptions (namely, shortest path and elementary
packing) force packets to take the \spc, which in turn forces them
to hop at least $2k^2\ell - k^2 \log \ell$ times.

\begin{lem}\label{lem:inapproxSP}
Any shortest path solution to the instance $\El$ depicted in Figure
\ref{fig:const} requires at least $2k^2\ell - k^2 \log \ell$ hops.
This holds regardless of whether the shortest path solution is
deterministic or randomized.
\end{lem}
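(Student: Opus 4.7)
The plan is to count, edge by edge, the packet hops of an arbitrary shortest-path solution on $\El$, using Lemma~\ref{lem:UniqueSP} to pin down where readings may travel. Since every non-SPC tail is the only kind of vertex with two shortest paths to the $\sink$, the algorithmic freedom is confined to each such tail's choice between (A) going up its own lane and on-ramping to the next gateway, or (B) taking the off-ramp back to its own gateway. Both options commit the reading's remaining journey to the $\spc$, because every gateway and every $\spc$ node has a unique shortest path upward. Consequently, the $(i-1)k^2$ readings originating in gadgets $1,\ldots,i-1$ are all forced to traverse the full $\spc$ segment of gadget $i$ together with its outgoing edge $e_i$, a sequence of $k/i$ consecutive edges.

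With the routing pinned down, I would extract three edge-level lower bounds. First, on each of those $k/i$ edges carrying the entire forwarded flow, the packet capacity forces at least $(i-1)k$ packets, and since the local $\spc$ readings contribute one further partial packet on top, the hop count on this stretch is at least $(i-1)k^2/i + k/i$. Second, each non-$\spc$ lane of gadget $i$ together with its off-ramp contributes exactly $k/i-1$ hops, independent of the choice at the tail: option B saves the lowest edge of the lane (which then carries no reading) but pays one extra off-ramp hop, a neutral trade, so the $ik-1$ non-$\spc$ lanes jointly contribute $(ik-1)(k/i-1)$ hops. Third, every non-$\spc$ head of gadget $i$ emits at least one non-empty packet onto an on-ramp, adding a further $ik-1$ hops out of the gadget.

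Summing these three contributions over $i = 1,\ldots,\ell$ and invoking $\sum_{i=1}^{\ell}(i-1)/i = \ell - H_\ell$ simplifies to the target bound $2k^2\ell - k^2 H_\ell$, which, since $H_\ell$ is dominated by $\log\ell$ for large $\ell$, implies the claimed $2k^2\ell - k^2\log\ell$. The step that requires the most care is ruling out a net saving from option B through interactions with the forwarded packets: because the $(i-1)k^2$ forwarded readings already fill exactly $(i-1)k$ packets, every additional reading injected at the gateway via an off-ramp forces a fresh partial packet on each downstream $\spc$ edge, so the $\spc$ bound only strengthens as more tails choose B. Finally, since the three edge-level inequalities hold pointwise on any fixed realization of the routing decisions, the bound transfers unchanged to the expected hop count of any randomized shortest-path algorithm.
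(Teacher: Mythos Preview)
Your argument is correct and reaches the same bound as the paper, and the underlying idea is identical: once Lemma~\ref{lem:UniqueSP} forces all previously-generated readings onto the \spc, gadget $i$ must pay roughly $(i-1)k^2/i$ ``forwarding'' hops on top of roughly $k^2$ ``local'' hops, and summing $\sum (i-1)/i=\ell-H_\ell$ gives $2k^2\ell-k^2H_\ell$.

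The decomposition, however, is organised differently. The paper splits the count reading-wise: (i) each of the $k^2$ nodes in a gadget emits at least one packet, giving $k^2\ell$ in-gadget hops; (ii) the $k^2$ readings born in gadget $i$ must subsequently traverse the \spc~segments of gadgets $i+1,\dots,\ell$, which---after dividing total reading-distance by $k$---yields $k\bigl(k/(i{+}1)+\cdots+k/\ell\bigr)$ further hops. You instead split edge-wise into \spc~edges, lane-plus-off-ramp edges, and on-ramp edges, and you make explicit the case analysis (option A vs.\ option B) at each non-\spc~tail, showing the trade is neutral for the lane count and can only increase the \spc~count. That extra case analysis is exactly what the paper compresses into its last paragraph on randomization, so your treatment of the randomized case is cleaner; conversely, the paper's reading-distance argument is a bit quicker once one accepts that the two hop-types are disjoint. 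Modulo a harmless off-by-one around the gateway edge (which affects the total by $O(k\ell^2)\ll k^2\ell$), the two countings coincide term-for-term.
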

\begin{proof}
Each gadget produces $k^2$ readings because that many nodes are
present in the gadget at that level. This has two consequences.
Firstly, the number of hops within a gadget, not counting the hops
of packets entering the gadget but counting the off-ramp hops, is at
least $k^2$.
The total number of such hops over all $\ell$ gadgets is $k^2 \ell$.
Secondly, the $k^2$ readings originating in gadget $i$ must each
travel a distance of $(k/(i+1)+k/(i+2)+ \cdots +k/\ell)$, where each
term accounts for the height of gadget $i+1$ up to gadget $\ell$. We
call these the \spc~hops because these readings must travel up the
\spc. Any alternate routing will violate the shortest path property.
Hence, we can argue (in similar lines as in Theorem \ref{thm:uccp})
that any optimal shortest path solution will form $k$ full packets
at the gateway node of gadget $i+1$. Hence, the total number of
packet hops will be $k[(k/(i+1)+k/(i+2)+ \cdots +k/\ell)]$. The
total number of \spc~hops originating over all $\ell$ gadgets is
\begin{eqnarray*}
  k^2 &[&(1/2 + 1/3+1/4+ \cdots + 1/\ell) + (1/3+1/4 + \cdots + 1/\ell) + \cdots + (1/\ell)]\\
   &=& k^2 [(\sum_{i=2}^{\ell} \frac{i-1}{i})] \approxeq k^2 [\ell - \log \ell].
\end{eqnarray*}
Therefore, the total number of hops is at least $k^2 \ell + k^2
[\ell - \log \ell] = 2k^2\ell - k^2 \log \ell$.

We note here that a randomized shortest path solution does not have
much flexibility because of Lemma~\ref{lem:UniqueSP}. The readings
from the tail nodes have two choices. However, any tail node that
takes the off-ramp into the \spc~will contribute to the two types of
hops mentioned regardless of the choice it makes. If it goes through
the \spc, it might contribute to more. Therefore, they are better
off traveling through their individual paths. Hence randomization
does not help in decreasing the number of hops.
\end{proof}

\begin{lem}\label{lem:inapproxElem}
Any  elementary solution to the problem instance $\El$  requires at
least $2k^2\ell - k^2 \log \ell$ hops. This holds regardless of
whether the shortest path solution is deterministic or randomized.
\end{lem}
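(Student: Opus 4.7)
The plan is to mirror the proof of Lemma~\ref{lem:inapproxSP}, with the elementary packing property playing the role that the shortest-path property played there. As in that proof, I would first argue that the $k^2$ readings in each gadget collectively require at least $k^2$ hops within the gadget (across lane and on-ramp edges), yielding the $k^2\ell$ term.

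For the additional $k^2(\ell-\log\ell)$ contribution, the key observation is that at the gateway of gadget $i$ (for $i\geq 2$), the $(i-1)k^2$ readings arriving from gadgets $1,\ldots,i-1$ together with the gateway's own reading force, by the elementary packing constraint (at most one partial packet emitted per node), the emission of at least $\lceil((i-1)k^2+1)/k\rceil = (i-1)k+1$ packets. I would then show that each such packet contributes at least $k/i+1$ hops in traversing gadget $i$, regardless of whether it goes up the \spc~or down an off-ramp. For a packet routed up the \spc, this is immediate, as the \spc~path between consecutive gateways spans exactly $k/i+1$ edges. For a full packet routed down an off-ramp to some lane's tail, a node-by-node application of elementary packing along the lane forces the lane to carry two packets (one full plus a growing partial accumulating the lane's own readings) up to the head, instead of the single partial required when no off-ramp packet arrives. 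This adds $k/i$ extra lane hops, and combined with the single off-ramp hop, yields $1+k/i$ hops that can be charged to the off-ramp packet. Summing $(i-1)k\cdot(k/i+1)$ over $i=2,\ldots,\ell$ gives $k^2\sum_{i=2}^\ell(i-1)/i \approx k^2(\ell-\log\ell)$, which together with the $k^2\ell$ within-gadget contribution establishes the claim.

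The main obstacle is the careful hop accounting in the off-ramp case: one must cleanly separate the lane's \emph{baseline} hops (those needed to carry the lane's own $k/i$ readings, already counted in the within-gadget $k^2$ bound) from \emph{induced} hops (the extra hops forced by an incoming off-ramp packet, charged to that packet's per-packet cost of $k/i+1$). A brief case analysis on whether the off-ramp packet is full or partial is needed; in particular, a partial off-ramp packet saves at most a lower-order $k/i$ hops per gadget, which does not affect the asymptotic bound. The extension to randomized elementary solutions is immediate, since the per-packet lower bound applies pointwise to every deterministic route that a randomized algorithm might sample, and hence in expectation.
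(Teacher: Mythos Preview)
Your sketch is sound but takes a different route from the paper. The paper does not do a per-packet charging argument; instead it argues that elementary packing \emph{forces the same routing} as in Lemma~\ref{lem:inapproxSP} and then simply re-invokes that lemma's arithmetic. Concretely, it first observes that the readings originating on non-\spc\ lanes in gadget $i{+}1$ can be assumed (without loss of generality, and without violating elementary packing) to travel tail-to-head along their own lanes, since this costs exactly one hop per node, which is already the minimum. It then notes that the gateway of gadget $i{+}1$, which must handle the $ik^2$ transit readings plus its own reading, emits exactly one partial packet under elementary packing; all full packets ``obviously'' take the \spc, and the single partial would only incur extra hops on any off-ramp route, so it too goes up the \spc. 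Having pinned the routing to the \spc, the paper defers entirely to the computation in Lemma~\ref{lem:inapproxSP}. Your approach, by contrast, charges each of the $(i{-}1)k$ gateway-emitted packets at least $k/i+1$ hops regardless of whether it takes the \spc\ or an off-ramp, using the observation that a full off-ramp packet doubles the packet count along its lane. This is more self-contained---no appeal to the previous lemma---but requires the baseline-versus-induced bookkeeping you identify, along with an implicit assumption that packets do not wander backward or revisit the gateway. The paper's reduction avoids that case analysis at the cost of the somewhat informal ``only incur extra hops'' step. Both arguments operate at comparable levels of rigor; yours is arguably more robust because it does not hinge on first normalizing the routing.
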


\begin{proof}
To prove this, all we need is to show that the ``best" elementary
solution will essentially route packets to the $\sink$ in the same
manner as described in Lemma~\ref{lem:inapproxSP}. In other words,
we need to show that all packets entering a gadget through the
gateway node must travel through the \spc~to the $\sink$. The
instance $\El$ is constructed such that only the gateway nodes have
 degree greater than 2. Therefore, to ensure that an algorithm for
$\El$ is elementary, we only need to ensure that gateway nodes
observe the elementary packing property.

Consider the gateway node in gadget $i+1$. The readings routed
through this gateway can be subdivided into those readings that {\em
must} be routed through the gateway and those that have an alternate
route. We first consider the readings that have an alternate route
and show that, for the purpose of analysis, they can be assumed to
take the alternate route rather than through the \spc.  The reading
that have an alternate route are the readings that originate from
nodes in gadget $i+1$ itself, but not in the segment of the \spc~in
that gadget. Consider all the readings from non-\spc~paths in gadget
$i+1$. They form $(i+1)k-1$ paths and each path is of length
$\frac{k}{i+1}$. If these readings moved in the tail-to-head
direction along the path they were in (instead of using the \spc),
they would require exactly $\frac{k}{i+1} ((i+1)k -1)$ hops, which
equals the number of non-\spc~nodes in gadget $i+1$. This implies
that exactly one hop must be accounted for each node's reading.
Since each node requires at least one hop, routing this readings in
any other way will not improve the hop count. Further, this
tail-to-head routing does not violate the elementary packing
principle. Hence, for any elementary solution, we can always
construct another solution in which the readings from nodes not in
the \spc~don't use the segment of the \spc~in their gadget.

The readings that must go through the gateway node are as follows.
\begin{enumerate}
\item It will receive $ik^2$ readings from gadgets 1 through $i$.
\item It has its own reading,  and
\item it also receives 1 reading from the tail node in the segment of the
\spc~in gadget $i+1$.
\end{enumerate}
The elementary packing property therefore requires that exactly 1
partial packet (containing exactly 2 readings) will hop out of the
gateway node. Quite obviously, all the full packets (in any
reasonable elementary algorithm) will follow the \spc. The partial
packet will also move up the \spc~because if it were to take the
off-ramp and go up the gadget through any other path, it will only
incur extra hops.

Now that we have shown that the elementary packing property forces
the routing to be similar to the one shown in
Lemma~\ref{lem:inapproxSP}, we can invoke the mathematical machinery
in that lemma to conclude the proof.
\end{proof}

\begin{thm} \label{thm:inapproxSP}
For any fixed $\epsilon > 0$, there is no
$(2-\epsilon)$-approximation algorithm for \uccp~that follows the
shortest path property. This holds even if randomization is
permitted.
\end{thm}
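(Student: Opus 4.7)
The plan is to use the family of instances $\El$ constructed earlier and simply combine the two bounds already established in Lemmas~\ref{lem:inapproxOPT} and \ref{lem:inapproxSP}. The former gives $\opt(\El) \leq k^2\ell + k\ell^2$, and the latter shows that any shortest-path algorithm (deterministic or randomized) uses at least $2k^2\ell - k^2\log\ell$ hops on $\El$. Dividing, the approximation ratio of any shortest-path algorithm on $\El$ is at least
$$\frac{2k^2\ell - k^2 \log \ell}{k^2\ell + k\ell^2} \;=\; \frac{2 - (\log \ell)/\ell}{1 + \ell/k}.$$

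Next, I would let $\ell \to \infty$. Since $k = \ell!$ grows super-polynomially in $\ell$, we have $\ell/k \to 0$; and clearly $(\log \ell)/\ell \to 0$. Hence the right-hand side tends to $2$ from below. For any fixed $\epsilon > 0$, I would choose $\ell$ large enough that both $\ell/k$ and $(\log \ell)/\ell$ are sufficiently small to force the ratio to exceed $2 - \epsilon$, contradicting the existence of a $(2-\epsilon)$-approximation shortest-path algorithm.

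There is no substantive obstacle at this point: the heavy lifting is already packaged inside Lemma~\ref{lem:inapproxSP}, and the randomization clause is inherited from that lemma, which uses Lemma~\ref{lem:UniqueSP} to argue that the only nodes with a genuine choice of shortest path—the non-\spc~tail nodes—cannot gain by randomizing between their two options. The only mild point requiring care is the asymptotic bookkeeping: confirming that the super-polynomial growth of $k = \ell!$ truly dominates the $k\ell^2$ term in the denominator and that $k^2\log\ell$ is negligible compared with $k^2\ell$ in the numerator. Both facts are immediate, so the proof reduces to stringing the two lemmas together and taking a limit.
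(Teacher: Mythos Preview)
Your proposal is correct and follows essentially the same approach as the paper: combine Lemmas~\ref{lem:inapproxOPT} and~\ref{lem:inapproxSP}, form the ratio $\frac{2k^2\ell - k^2\log\ell}{k^2\ell + k\ell^2}$, and let $\ell \to \infty$ using $k = \ell!$ to conclude the limit is $2$. Your factorization $\frac{2 - (\log\ell)/\ell}{1 + \ell/k}$ is in fact slightly cleaner than the paper's presentation, and your handling of the randomization clause by deferring to Lemma~\ref{lem:inapproxSP} (and through it Lemma~\ref{lem:UniqueSP}) matches the paper exactly.
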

\begin{proof}
Using the number of hops counted in Lemmas \ref{lem:inapproxOPT} and
\ref{lem:inapproxSP} in the asymptotic sense, the approximation
ratio for any shortest path algorithm is at least

\begin{eqnarray*}
\lim_{\ell \rightarrow \infty} \frac{2k^2\ell - k^2 \log
\ell}{k^2\ell + k\ell^2} &=&\lim_{\ell \rightarrow \infty}
\frac{2k^2(\ell - \frac{\log \ell}{2})}{k^2(\ell + \frac{\ell^2}{k})} \approxeq \lim_{\ell \rightarrow \infty}\frac{2(\ell - \frac{\log \ell}{2})}{l}\\
&=&\lim_{\ell \rightarrow \infty} (2-\frac{\log \ell}{\ell}) = 2.
\quad \quad \quad (\mbox{since } k \gg \ell)
\end{eqnarray*}
Since the limit reaches 2 from below, the theorem holds.
\end{proof}

The following theorem also follows similarly except that we must use
Lemma~\ref{lem:inapproxElem} instead of Lemma~\ref{lem:inapproxSP}.

\begin{thm} \label{thm:inapproxElem}
For any fixed $\epsilon > 0$, there is no
$(2-\epsilon)$-approximation algorithm for \uccp~that respects the
elementary packing property. This holds even if randomization is
permitted.
\end{thm}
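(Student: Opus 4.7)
The plan is to mirror the proof of Theorem~\ref{thm:inapproxSP} almost verbatim, only swapping the role of Lemma~\ref{lem:inapproxSP} with Lemma~\ref{lem:inapproxElem}. Concretely, I would fix $\epsilon > 0$, consider the family of instances $\El$ (with $k = \ell!$ so that the construction is well-defined for every $\ell$), and compare the cost of an arbitrary elementary algorithm on $\El$ against the non-elementary, non-shortest-path solution that was exhibited for the upper bound.

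First I would invoke Lemma~\ref{lem:inapproxOPT} to assert that $\opt(\El) \leq k^2\ell + k\ell^2$. Next I would invoke Lemma~\ref{lem:inapproxElem}, which is the step that does all the work in this theorem, to conclude that any elementary solution (deterministic or randomized) pays at least $2k^2\ell - k^2 \log \ell$ hops on $\El$. Taking the ratio yields
\begin{equation*}
\frac{\A(\El)}{\opt(\El)} \;\geq\; \frac{2k^2\ell - k^2 \log \ell}{k^2\ell + k\ell^2} \;=\; \frac{2(\ell - \tfrac{1}{2}\log \ell)}{\ell + \ell^2/k}.
\end{equation*}
Since $k = \ell!$ grows much faster than $\ell$, the $\ell^2/k$ term vanishes and the bound simplifies to $2 - (\log \ell)/\ell + o(1)$, which tends to $2$ from below as $\ell \to \infty$. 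Hence for any fixed $\epsilon > 0$ we can choose $\ell$ large enough that the approximation ratio exceeds $2 - \epsilon$, contradicting any purported $(2-\epsilon)$-approximation.

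The only subtlety worth flagging is the randomization clause: because Lemma~\ref{lem:inapproxElem} already handles randomized elementary algorithms (the bound there is a worst-case lower bound that holds irrespective of random choices at the gateway nodes, since the elementary packing restriction forces essentially one partial packet out of each gateway regardless of internal coin tosses), the same limiting argument applies to randomized algorithms without any extra effort. There is no genuinely hard step beyond correctly citing Lemmas~\ref{lem:inapproxOPT} and~\ref{lem:inapproxElem}; the proof is essentially a one-line substitution into the limit computation of Theorem~\ref{thm:inapproxSP}.
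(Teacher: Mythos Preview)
Your proposal is correct and matches the paper's own approach exactly: the paper simply states that the theorem ``follows similarly except that we must use Lemma~\ref{lem:inapproxElem} instead of Lemma~\ref{lem:inapproxSP},'' which is precisely the substitution you carry out. Your limit computation and handling of the randomization clause are both in line with how the paper treats Theorem~\ref{thm:inapproxSP}.
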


\section{\spt~on Tree and Grid Networks} \label{sec:special} We now turn our attention
to the performance of \spt~on special cases based on the graph $G$.

\begin{thm}\label{thm:sptTree}
\spt~is optimal for \uccp~when the underlying graph $G$ is a tree.
\end{thm}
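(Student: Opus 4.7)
The plan is to exploit the rigidity of trees: there is exactly one path from every vertex to $\sink$, so every algorithm (optimal or otherwise) is forced to route each reading along that unique path. All the freedom is in packing, and I will show that \spt{} achieves the trivial per-edge lower bound on each edge simultaneously.

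For each non-sink vertex $v$, let $n_v$ denote the number of vertices in the subtree rooted at $v$ in $T = G$ (equivalently, the number of unit readings that must cross the edge from $v$ to its parent). First I would establish the lower bound: since every one of those $n_v$ readings must traverse the edge $(v,\text{parent}(v))$ and each packet carries at most $k$ bytes, any algorithm must make at least $\lceil n_v/k \rceil$ packet hops across that edge. Summing,
\[
\opt \;\ge\; \sum_{v \neq \sink} \left\lceil \frac{n_v}{k} \right\rceil.
\]

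Next I would prove, by induction on the depth $d(v)$ from the leaves, that \spt{} sends exactly $\lceil n_v/k \rceil$ packets out of $v$. For a leaf, $n_v = 1$ and \spt{} sends one partial packet, matching the bound. For an internal vertex $v$ with children $c_1,\ldots,c_r$, by induction child $c_i$ sends $\lceil n_{c_i}/k\rceil$ packets to $v$; write this as $f_i$ full packets plus at most one partial packet of $p_i$ readings, so that $n_{c_i} = f_i k + p_i$. By the \spt{} rule, $v$ forwards all $\sum_i f_i$ full packets intact and consolidates the incoming partial readings together with its own reading into $\lceil(\sum_i p_i + 1)/k\rceil$ packets. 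The total packets leaving $v$ is therefore
\[
\sum_i f_i + \left\lceil \frac{\sum_i p_i + 1}{k} \right\rceil
\;=\; \left\lceil \frac{\sum_i f_i k + \sum_i p_i + 1}{k} \right\rceil
\;=\; \left\lceil \frac{n_v}{k} \right\rceil,
\]
completing the induction.

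Summing the per-edge equality over all non-sink vertices gives $\A = \sum_{v \neq \sink} \lceil n_v/k \rceil \le \opt$, and combined with the lower bound this yields $\A = \opt$. No step here is a real obstacle; the only place requiring care is verifying the arithmetic identity used in the inductive step, which works precisely because elementary packing never produces more than one partial packet per vertex so that the ceilings combine cleanly.
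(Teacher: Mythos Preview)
Your argument is correct and is exactly the paper's approach: both prove optimality by observing that any algorithm must send at least $\lceil n_v/k\rceil$ packets across the edge above $v$, and that \spt\ meets this bound at every vertex (the paper writes the bound as $\lfloor R_v/k\rfloor + \lceil (R_v \bmod k)/k\rceil$ and simply asserts \spt\ achieves it, whereas you supply the induction). One small tidy-up: to justify writing $n_{c_i}=f_ik+p_i$ with $0\le p_i<k$, your inductive hypothesis should be the slightly stronger statement that $c_i$ outputs $\lfloor n_{c_i}/k\rfloor$ full packets together with at most one partial packet carrying $n_{c_i}\bmod k$ readings---this is what your inductive step actually proves and uses.
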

\begin{proof}
Since $G$ is a tree, all the readings from the descendants of any
vertex $v$ (including $v$'s reading) will have to pass through $v$.
Suppose there are $R_v$ such readings. Then any algorithm will have
to transmit at least $\lfloor \frac{R_v}{k}\rfloor + \lceil
\frac{R_v \mod k}{k}\rceil$, which is precisely the number of packet
transmissions out of $v$ in \spt. Therefore, \spt~is optimal with
respect to the number of packet hops.
\end{proof}

 Suppose
the graph $G$ is a grid with  $\vm$ rows  and $\vn$ columns and the
$\sink$ is the vertex at $(1,1)$, i.e. at row 1 and column 1. Since
we are interested in the asymptotic behavior, we assume that $\vm$
and $\vn$ are $\omega(k)$. Furthermore, without loss of generality,
we assume that $\vm$ and $\vn$ are multiples of $k$. We show that
\sptg, an implementation of \spt with a specific underlying shortest
path tree designed for grids, is asymptotically optimal. Whether all
underlying shortest path trees lead to asymptotic optimality remains
open.


The specific shortest path tree for \sptg~on an $(\vm \times \vn)$
grid is as follows: we designate each edge in $G$ to be ``vertical"
(resp. ``horizontal") if it connects vertices from the same column
(resp. same row). All vertical edges are included in the \spt~tree;
horizontal edges are included iff they are from row 1. Intuitively,
the packets move up the columns until they reach the first row. Once
they reach the first row, they move towards the $\sink$ along the
first row. Note that in keeping with our definition of \spt, once a
packet becomes full, it does not split.

\begin{thm}\label{thm:sptGrid}
\sptg~is asymptotically optimal for \uccp~when the underlying graph
$G$ is an $\vm \times \vn$ grid, provided $\vm$ and $\vn$ are in
$\omega(k)$.
\end{thm}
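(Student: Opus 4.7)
The plan is to compute the hops incurred by \sptg~up to a lower-order correction and to derive a matching asymptotic lower bound on \opt~via the reading-distance principle underlying Lemma~\ref{lem:readingMovement}. Because \sptg~routes every reading first up its column to row 1 and then along row 1 to the $\sink$, the hop count splits into two cleanly enumerable parts: column hops and row-1 hops, each obtained by summing edge loads.

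For the upper bound I would fix a column $j$ and observe that the edge from $(i,j)$ to $(i-1,j)$ transports exactly the readings of rows $i,\dots,\vm$, so it carries $\lceil(\vm-i+1)/k\rceil$ packets. Using the assumption $k\mid\vm$, a direct evaluation of $\sum_{i=2}^{\vm}\lceil(\vm-i+1)/k\rceil$ gives $\vm^2/(2k)+\vm/2-\vm/k$ per column, hence $\vn\vm^2/(2k)+O(\vm\vn)$ over all $\vn$ columns. After the column phase, node $(1,j)$ has accumulated its column's $\vm-1$ readings plus its own, which \sptg~repackages into exactly $\vm/k$ full packets since $k\mid\vm$. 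Thus every packet on row 1 is full, and the edge from $(1,j{+}1)$ to $(1,j)$ carries $(\vn-j)\vm/k$ packets, summing to $\vm\vn^2/(2k)-O(\vm\vn/k)$ row-1 hops. Adding the two parts yields $\sptg=\vm\vn(\vm+\vn)/(2k)+O(\vm\vn)$.

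For the lower bound I apply the same charging that appears in Lemma~\ref{lem:readingMovement}: every reading must traverse at least its shortest-path distance in any feasible solution, while each packet hop advances at most $k$ readings by one edge, giving $k\cdot\opt\geq\sum_{v}d(v)$. On the grid, $d(i,j)=(i-1)+(j-1)$, and a direct sum gives $\sum_v d(v)=\vm\vn(\vm+\vn)/2-\vm\vn$. Hence
\[
\opt\;\geq\;\frac{\vm\vn(\vm+\vn)}{2k}\,-\,\frac{\vm\vn}{k}.
\]
Dividing the upper and lower bounds, $\sptg/\opt\leq 1+O\!\left(k/(\vm+\vn)\right)=1+o(1)$ by the assumption $\vm,\vn\in\omega(k)$, which gives the asymptotic optimality. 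The main technical care is in controlling the $O(\vm\vn)$ slack from the column analysis, which is contributed by the at most one partial packet produced after each $k$-block of nodes within a column; the $\omega(k)$ hypothesis on $\vm$ and $\vn$ is precisely what makes this slack negligible relative to the leading $\vm\vn(\vm+\vn)/(2k)$ term, and a weaker growth assumption would leave the ratio bounded away from $1$.
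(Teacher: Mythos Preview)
Your proof is correct and follows essentially the same structure as the paper's: split the \sptg~cost into column hops and row-1 hops, bound each to obtain $\frac{\vm\vn(\vm+\vn)}{2k}+O(\vm\vn)$, and match this against a lower bound of $\frac{\vm\vn(\vm+\vn-2)}{2k}$. The one cosmetic difference is how the lower bound is derived: you use the reading-distance inequality $k\cdot\opt\geq\sum_v d(v)$ (the paper's LB2 from Section~\ref{sec:expt}), while the paper argues via horizontal and vertical cuts (its LB3), bounding separately the hops that cross each row cut and each column cut; since $\sum_i n_i=\sum_v d(v)$, the two arguments yield the identical numerical bound, and both are equally valid here.
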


\begin{proof}
We begin by evaluating $h_{LB}$, the lower bound on number of hops
required by any algorithm. Consider a horizontal cut in $G$ betweens
rows $i$ and $i+1$. There are $(\vm - i) \vn$ readings below this
cut. All these readings must pass through this cut. Assume that they
pass through in full packets. Therefore at least $(\vm - i) \vn /k$
hops will pass through the cut. Considering all such horizontal
cuts, the number of hops crossing these cuts must be at least
$\sum_{i=1}^{\vm} (\vm - i) \vn /k = \frac{\vm \vn(\vm-1)}{2k}$.
Similarly, we can also construct vertical cuts which induce at least
$\frac{\vm \vn(\vn-1)}{2k}$ row-wise hops. Therefore, any algorithms
will require at least $h_{LB}$ hops given by
\begin{equation}
h_{LB} = \frac{\vm \vn(\vm-1)}{2k} + \frac{\vm \vn(\vn-1)}{2k} =
\frac{\vm \vn (\vm +\vn -2)}{2k}. \label{eqn:GridLB}
\end{equation}

\begin{figure}[!htb]
\centering \ifpdf
    \includegraphics[width=90mm]{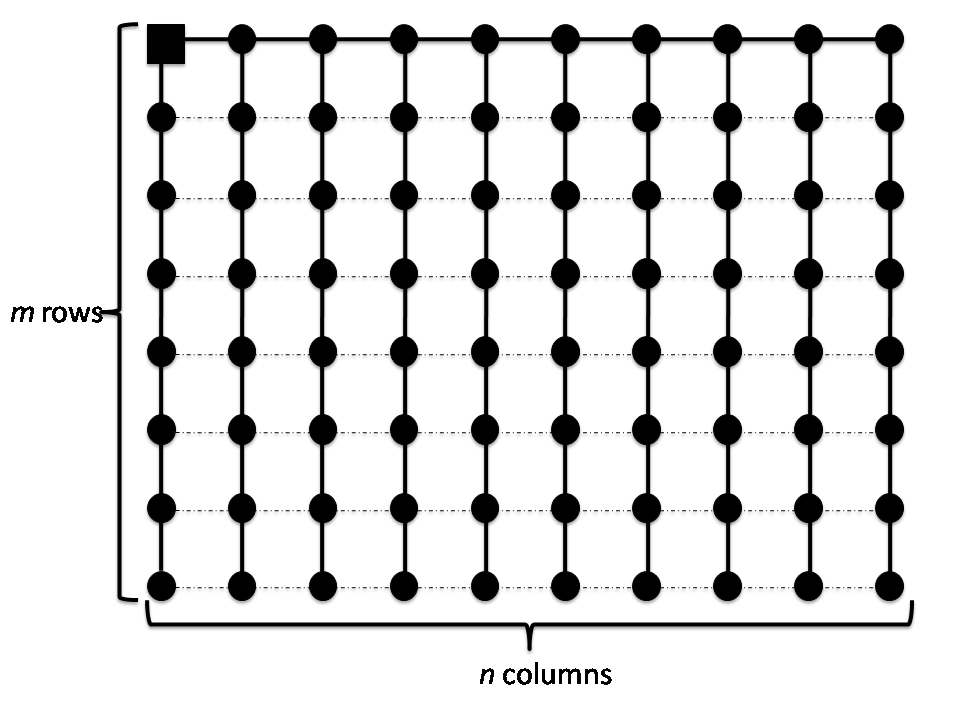}
\else
    \includegraphics[width=90mm, bb=0 0 960 720]{grid.png}
\fi \caption{\sptg~on grid. The square vertex is the $\sink$. The
full edges form the shortest path tree, while the dotted edges are
discarded.} \label{fig:grid}
\end{figure}

\sptg~starts with moving the packets up along columns. Once all the
readings in a column are collected on the first row, the packets
then move along the row to the $\sink$. In each column, as the
packets move upward, a new full packet is formed every $k$ vertices.
If we count all the partial hops in a single column, they are at
most $m-1 \leq m$. Since there are $n$ columns, there are at most
$mn$ partial hops. Since the lower bound on the number of hops (from
Equation \ref{eqn:GridLB}) is $O(\vm \vn (\vm + \vn))$, the partial
hops don't have any bearing on the asymptotic approximation ratio.
Therefore, we are interested in evaluating $h^{\uparrow}$ and
$h^{\leftarrow}$, which are the number of full packet hops up (along
columns) and left (along rows) respectively.

There are at most $\vm / k - 1 \leq \vm/k$ full packets formed in
each column. The first full packet is formed at row $\vm -k$ and
full packets are formed regularly at an interval of $k$ packets.
From the vertex at which a full packet is formed, it will have to
travel up to row 1. Therefore, the number of full packet hops in
each column is at most
\begin{eqnarray} (\vm -k) + (\vm - 2k) + \cdots + (\vm- (\vm/k)k &\leq& (\vm^2/k) - k(1 +2
+\cdots + \vm/k) \notag\\
&\leq& (\vm^2 /k) - k \frac{(\vm/k)^2}{2} \notag \\
&=& \frac{\vm^2}{2k}. \notag
\end{eqnarray}
Since there are $\vn$ columns in total, the number of hops up the
columns, $h^{\uparrow}$, is at most
\begin{equation}
h^{\uparrow} = \frac{\vn \vm^2}{2k}. \label{eqn:GridUp}
\end{equation}
Once the full packets reach the first row, they hop along the row
towards the $\sink$. Each column generates at most $\vm /k$ full
packets. Therefore, the total number of horizontal hops,
$h^\leftarrow$ is given by:
\begin{eqnarray*}
h^\leftarrow &=& (\vm/k) (1 + 2 + \cdots + \vn-1) \notag\\
&=& (\vm/k) (\vn-1)(\vn)/2 \notag \\
&\leq& \frac{\vm \vn^2}{2k}. \label{eqn:GridLeft}
\end{eqnarray*}

Therefore, the total number of full hops for $\sptg$ is at most
\begin{equation}
h^\uparrow + h^\leftarrow = \frac{\vm \vn (\vm + \vn)}{2k}.
\label{eqn:GridUB}
\end{equation}

From Equations \ref{eqn:GridLB} and \ref{eqn:GridUB}, it is clear
that the upper bound and the lower bound converge asymptotically.
\end{proof}

\section{Experimental Study} \label{sec:expt}
The lower bound for \spt~is derived from pathological
  problem instances. It is quite likely that it actually does much
  better in practice. We would like to show this via
  experimentation.  For this purpose, we used Python to implement \spt~ and another
algorithm referred to as {\tt BASIC} later.  {\tt BASIC} constructs
a depth first search tree and then uses batch processing to send
sensor readings along the tree to the $\sink$. {\tt BASIC} has been
used in  many real world sensor network deployment such
as~\cite{PIL+09}.  We also computed the three lower bounds on the
number of hops given a network topology for comparison. They are:
\begin{description}
  \item[LB1:] The number of non-$\sink$ vertices $|V|$,
  \item[LB2:] $\sum_{v \in V} d(v)/k$, where $d(v)$ is the smallest number
  of hops between $v$ and the $\sink$, and
  \item[LB3:] $\sum_{i=1}^{D} n_i/k$, where $D$ is the distance from the
  $\sink$ to the vertex farthest away from the $\sink$ and $n_i$ is the number of vertices
  in $G$ whose distance from the $\sink$ is at least $i$.
\end{description}
 \begin{figure}[!htb]
\centering \ifpdf
    \centerline{\hbox{ \hspace{0.0in}\includegraphics[width=0.33\textwidth]{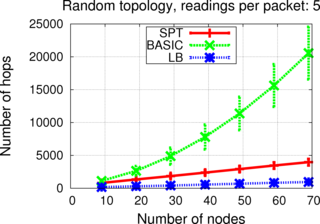}
  \hspace{0in}
  \includegraphics[width=0.33\textwidth]{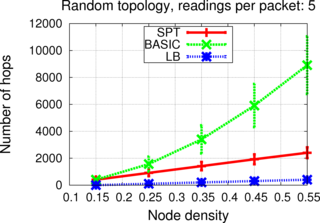}
  \hspace{0in}
  \includegraphics[width=0.33\textwidth]{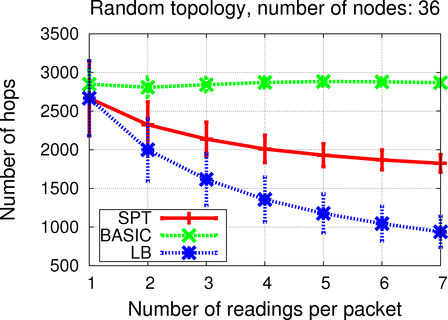}
  }
}
\else
    \centerline{\hbox{ \hspace{0.0in}
  \includegraphics[width=0.33\textwidth, bb=0 0 320 224]{random_size.png}
  \hspace{0in}
  \includegraphics[width=0.33\textwidth, bb=0 0 320 223]{random_density.png}
  \hspace{0in}
  \includegraphics[width=0.33\textwidth, bb=0 0 448 320]{random_k.png}
  }
}
\fi
\caption{Performance of \spt~in random topology: average ratio
of the number of hops for \spt~over the lower bound is 1.45, 1.29,
and 1.14 for the three figures respectively.} \label{fig:random}
\end{figure}

The following parameters are varied to mimic different network
scenarios: network topology,  network size, network density, and $k$
values. 
Figure~\ref{fig:random} shows
the impact of network size, node density and $k$ value
where sensor nodes are uniformly randomly deployed in the network.  We have omitted
the results for grid topologies as similar trends were observed.   In these
figures, we only showed the maximum of the three lower bounds for
comparison. (1) As the network size increases, the number of hops
needed increases almost exponentially when {\tt BASIC} is used  but
only increases moderately when  \spt~is used.  This is because {\tt
BASIC} uses depth first search method to build the collection tree
and the tree height  increases as the network size or increases. (2)
The number of hops for \spt~decreases slightly as density increases,
since the connectedness of the graph increases with density and this
can decrease the depth of the tree. (3) When more readings can be
included in one packet, the number of hops needed decreases when
either algorithm is used, especially for  {\tt BASIC}.  However, the
performance of {\tt BASIC} is still worse than the performance of
\spt~.   The number of hops needed when \spt~is used is only
slightly higher than the lower bound in all cases, which validates
our claim that \spt~works well in practice. In all cases, the ratio
of  the number of hops for \spt~over the lower bound is less than
1.5.


\paragraph{\bf Acknowledgement:} We are  thankful to M. V. Panduranga Rao
and Dilys Thomas for participating in many discussions and providing
valuable suggestions.

\bibliographystyle{plain}
\bibliography{BibTex}

\end{document}